\documentclass[11pt]{article}
\usepackage[margin=1in]{geometry}
\usepackage{array,xspace,multirow,hhline,tikz,colortbl,tabularx,booktabs,fixltx2e,amsmath,amsfonts,amsthm}
\usepackage{soul}
\usepackage{pifont}
\usepackage{times}
\usepackage{natbib}
\usepackage{enumerate}
\usepackage{hyperref}

\usepackage{latexsym}
\usepackage{bm}
\usepackage{float}
\usepackage{epsfig}
\usepackage{xspace}
\usepackage{paralist}
\usepackage{enumerate}
\usepackage{cases}
\usepackage{caption}
\usepackage{mathtools}
\usepackage{algpseudocode}
\usepackage{graphicx}
\usepackage{subcaption}

\usepackage{xcolor}
\usepackage{complexity}
\usepackage{multicol}
\usepackage{thm-restate}

\usepackage{multirow}
\usepackage{diagbox}
\usepackage{verbatim}
\usepackage{enumitem}
\usepackage{tikz}
\usepackage{tikz-network}

\usepackage{mfirstuc}

\newtheorem{theorem}{Theorem}[section]

\newtheorem{definition}[theorem]{Definition}

\newtheorem{lemma}[theorem]{Lemma}

\newtheorem{proposition}[theorem]{Proposition}

\newcommand{\PROP}{\mathsf{PROP}}
\newcommand{\MMS}{\mathsf{MMS}}

\usepackage{url}
\newcommand{\ssc}[1]{\textsc{\capitalisewords{\MakeLowercase{#1}}}}
\renewcommand{\epsilon}{\varepsilon}

\begin{document}

\title{Fair Orientations: Proportionality and Equitability}
\author{Ankang Sun\thanks{School of Computer Science and Technology, Shandong University, Qingdao, China. ankang.sun@sdu.edu.cn} \and Ruijie Wang\thanks{Department of Computing, The Hong Kong Polytechnic University, Hong Kong, China. ruijie.wang@connect.polyu.hk} \and Bo Li\thanks{Department of Computing, The Hong Kong Polytechnic University, Hong Kong, China. comp-bo.li@polyu.edu.hk}}
\date{}

\maketitle

\begin{abstract}
We study the fair allocation of indivisible items under relevance constraints, where each agent has a set of relevant items and can only receive items that are relevant to them. 
While the relevance constraint has been studied in recent years, existing work has largely focused on envy-freeness. 
Our work extends this study to other key fairness criteria --- such as proportionality, equitability, and their relaxations --- in settings where the items may be goods, chores, or a mixture of both.
We complement the literature by presenting a picture of the existence and computational complexity of the considered criteria.

\medskip\noindent\textbf{Keywords:} Fair division, Graph Orientation, Proportionality, Equitability
\end{abstract}

\section{Introduction}
Fair division is a fundamental and challenging problem in both AI and economics. It involves allocating resources or tasks among participants while ensuring that participants are treated fairly. 
There is no single fairness criterion universally applicable. Among the various proposed fairness notions, envy-freeness \citep{VARIAN197463} and proportionality \citep{steinhaus1949division} are two of the most prominent and well-studied.
Envy-freeness requires that no participant prefers the allocation of another over their own, while proportionality requires that each participant receives a value at least their proportional share, a threshold determined by the total number of participants and agents' valuations.

In the classical setting, any item can be allocated to any participant. However, in practical allocation problems, there are often relevance relationships between resources and participants, which restrict an item from being assigned only to a subset of agents. 
Relevance captures real-world scenarios where certain items are only accessible to specific agents due to geographic restrictions or demand requirements. For example, in allocating railway maintenance jobs between cities A and B, the jobs are typically managed by the local authorities of each city. Similarly, in sports leagues such as the NBA and NFL, matches are held on the home court of one of the competing teams. Motivated by these examples, we study the allocation of indivisible items where relevance relationships between agents and items are represented via graphs.
In this framework, vertices correspond to agents, edges correspond to items, and an agent is incident to an edge if the item is relevant to that agent.
An agent can only receive the edges incident to her, and such an outcome is called an \emph{orientation}.

The allocation model with relevance relationships, also referred to as the orientation model in this paper, has been studied in machine scheduling \citep{DBLP:journals/algorithmica/EbenlendrKS14} and has recently gained popularity in fair division 
\citep{DBLP:conf/sigecom/0001FKS23,DBLP:journals/corr/abs-2409-13616,DBLP:journals/corr/abs-2407-05139,DBLP:conf/sigecom/AmanatidisFS24}. 
Most of these works on the fair division problem focus on envy-freeness, and to the best of our knowledge, the only exception is the recent work by \citet{DBLP:journals/corr/abs-2506-20317}, which investigates maximin share (MMS)  fairness \cite{budish2011combinatorial} within the orientation model.
However, proportionality, another well-studied fairness notion in the classical setting, has been largely overlooked.
Proportionality has been widely adopted as a golden fairness criterion in, for example, cake cutting \citep{DBLP:journals/geb/ChenLPP13,DBLP:journals/cacm/Procaccia13}, clustering \citep{DBLP:conf/icml/ChenFLM19,DBLP:conf/nips/Kellerhals024}, committee selection \citep{DBLP:conf/aaai/Kalayci0K24}, participatory budgeting \citep{DBLP:conf/nips/PierczynskiSP21}, matching \citep{DBLP:conf/nips/AmanatidisBFV22}, sortition \citep{DBLP:journals/nature/FlaniganG0HP21,DBLP:journals/corr/abs-2406-00913}, and bargaining \citep{kalai1977proportional}.

In the classic setting, an allocation is called \emph{proportional} (PROP) if, informally, each agent gets at least as much value as they would if all the resources were evenly distributed. 
That is, each agent receives a value of at least $\frac{1}{n}$ fraction of her total value of all items, where $n$ refers to the number of agents.
When the relevance relationship is introduced, we refine the threshold so that each agent expects her relevant items to be evenly distributed among the agents relevant to the underlying item.
Another fairness criterion that has not been studied in the orientation model is \emph{equitability} \citep{DBLP:books/daglib/0017730}, which requires that each agent receives the same level of value.

In this paper, we investigate the notions of proportionality and equitability, as well as their relaxations, under relevance relationships between agents and items. Our goal is to provide a picture of the existence of the computational results for these fairness notions.

\subsection{Our Contribution}

We consider the problem of allocating a set $E$ of $m$ indivisible items among $n$ agents, where items can be either goods (providing positive value) or chores (imposing negative value), or a combination of both. 
Each agent $i$ is associated with a set of relevant items $E_i \subseteq E$. 
We say an allocation is an orientation if each agent $i$ receives only the items relevant to her. Throughout the paper, we focus on the case where all items are allocated, and all resulting allocations are orientations.

\paragraph{Proportionality}

Suppose that agent $i$ has an additive valuation function $v_i(\cdot):2^{E}\rightarrow \mathbb{R}$, mapping a subset of items to a real number.
The original definition of the proportional share for agent $i$ is $\frac{1}{n}v_i(E)$, which is widely adopted in the literature.
To define proportionality in the orientation setting, one may set agent $i$'s value for each irrelevant item to 0, an approach applied in the literature (on envy-based notions, e.g., \citet{DBLP:journals/corr/abs-2409-13616}), 
and use the original definition of the proportional share.

However, in the orientation model, whether irrelevant items should be included in an agent's proportional share is a matter of discretion. For example, if a good $e$ is only relevant to agents $i$ and $j$, then $e$ would be allocated to one of them. If agent $i$ is aware that $e$ will never be allocated to the other $n-2$ agents, she might be unhappy if her entitlement is merely $\frac{1}{n}$ of $e$. Instead, she may reasonably expect to receive a $\frac{1}{2}$ share of $e$. Similarly, if a chore or task $e$ is relevant only to agents $i, j$, it would be overly optimistic for agent $i$ to assume she needs to undertake only a $\frac{1}{n}$ portion of $e$, as the task will ultimately be assigned to one of these two agents.

We now propose a refined definition of proportional share based on the relevance of each item.
For each item $e \in E$, let $n_e$ be the number of agents to whom $e$ is relevant. 
The proportional share of $i$ is 
\[
\PROP_i = \sum_{e \in E_i}\frac{1}{n_e}v_i(e),
\]
which means that, for each $e \in E_i$, agent $i$'s entitlement is $\frac{1}{n_e}$ of the value of $e$. This definition reduces to the original definition if each item $e$ is relevant to all agents.
An orientation is called PROP if every agent $i$'s value is at least $\PROP_i$.

Next, we summarize our results, beginning with PROP. First, the PROP orientations do not always exist. We are then interested in the problem of deciding whether an instance admits a PROP orientation and have the following results.
\begin{itemize}
    \item The decision problem is NP-complete when items are all goods (or all chores) and agents have {(1,2) or (-1,-2)} bi-valued valuations and the relevance can be represented by simple graphs, i.e., every item is relevant to exactly two agents and $|E_i\cap E_j| \le 1$ for all $i,j$.
    \item When valuations are binary (the absolute marginal value of any item being 1), the decision problem can be answered in polynomial time for arbitrary relevance. 
\end{itemize}

We then consider the up to one or any relaxation of PROP.
Informally, an orientation is PROP up to one (resp., any) item, abbreviated as PROP1 (resp., PROPX), if each agent would satisfy PROP after hypothetically adding or removing one (resp., any) item from her bundle.
For the weaker criterion of PROP1, we show that
\begin{itemize}
    \item PROP1 and fractional Pareto optimality (fPO) orientations exist and can be computed in polynomial time for mixed items and arbitrary relevance.
    \item
    For goods and multigraphs, there is always a PROP1 orientation (called strongly PROP1) that is also an MMS fair allocation, which in turn implies the existence of MMS allocations in the multigraph model, as proved in \citep{DBLP:journals/corr/abs-2506-20317}.

\end{itemize}

This result strengthens the unweighted result in \citep{DBLP:journals/orl/AzizMS20}, where each item can be allocated to every agent.
Moreover, we propose a stronger version of PROP1, called SPROP1, which cannot be guaranteed to exist for additive valuations without relevance constraints. However, such an orientation does exist in the multigraph orientation model.
We also consider PROPX, of which the results are
more negative.
\begin{itemize}
    \item PROPX orientations do not always exist.
    \item It is NP-complete to decide whether an instance admits a PROPX orientation, even for simple graphs containing only goods (or only chores) and agents with {(0,1)- or (0,-1)-}binary valuations. 
\end{itemize}

These results show a sharp contrast to the unconstrained setting, where PROPX allocations always exist for chores \citep{DBLP:conf/www/0037L022,DBLP:journals/ai/AzizLMWZ24}.

\paragraph{Equitability}

An allocation is equitable (EQ) if the subjective values of all agents are the same.
For indivisible items, EQ allocations do not always exist, and hence, we consider relaxations such as EQ1 and EQX. 
Their formal definitions are provided in the next section. In contrast to the results in the unconstrained setting, where EQ1 and EQX allocations are guaranteed to exist, we prove that
\begin{itemize}
    \item EQ1/EQX orientations do not always exist.
    \item It is NP-complete to decide whether an instance admits an EQ, EQ1, or EQX orientation, even when the relevance can be represented through simple graphs.
\end{itemize}

\paragraph{Envy-freeness}

Although envy-freeness has been studied within the orientation model in the literature, its relaxation, envy-free up to one item (EF1), remains poorly understood for chores. 
To address this gap, we complement the existing results by examining EF1 orientations for chores. 
Different from proportionality and equitability, envy-based fairness criteria are intrapersonal; that is, each agent has an evaluation on other agents' bundles so that whether envy exists becomes verifiable.
To ensure that envy-based notions are well-defined, a common assumption in existing work is that the value of any irrelevant item is zero for every agent \citep{DBLP:journals/corr/abs-2409-13616,DBLP:conf/ijcai/0027WL024}. In this paper, we adopt this assumption when examining envy-based notions of fairness.

Our results show that, in stark contrast to goods for which an EF1 orientation always exists even in the orientation model \citep{DBLP:journals/corr/abs-2409-13616}, EF1 orientations may not exist for chores, even when the relevance relationships are represented by simple graphs. 
Moreover, for such simple graphs, we characterize the necessary and sufficient conditions that guarantee the existence of EF1 orientations and show that it can be decided in polynomial time whether a given instance admits one. 
However, when relevance is modeled using multigraphs, i.e., each item is relevant to two agents, and multiple items can be relevant to a pair of agents, the decision problem becomes NP-complete.

\subsection{Other Related Work}

We refer to surveys \citep{moulin2019fair,DBLP:journals/ai/AmanatidisABFLMVW23} for comprehensive coverage of recent results on the fair allocation of indivisible items, and \citep{DBLP:journals/sigecom/Suksompong21} for various constraints that have been studied.
In the following, we recall the most relevant work to our paper. 
Without constraints, EF1  and PROP1 allocations always exist and can be computed in polynomial time for goods, chores, and the mixture of goods and chores \citep{DBLP:conf/approx/BhaskarSV21}.
Furthermore, PROP1 and Pareto optimal are known to be compatible in the mixed setting \citep{DBLP:conf/aaai/BarmanK19,DBLP:journals/orl/AzizMS20}. 
PROPX allocations exist for chores but not for goods \citep{DBLP:journals/orl/AzizMS20,DBLP:conf/www/0037L022,DBLP:journals/ai/AzizLMWZ24}.
The existence of EFX allocation remains unknown, except for several special cases \citep{DBLP:journals/jacm/ChaudhuryGM24}.

Initiated by \citet{DBLP:conf/sigecom/0001FKS23}, EFX orientations of relevant items have been extensively studied. 
\citet{DBLP:conf/ifaamas/AfshinmehrDKMR25} and \citet{DBLP:journals/corr/abs-2410-12039} extended the study to multi-graphs, and \citet{DBLP:conf/ijcai/0027WL024} and \citet{DBLP:journals/corr/abs-2501-13481} generalized the results to chores, the mixture of goods and chores, and other variants of the EFX concept.
\citet{DBLP:journals/corr/abs-2404-13527} characterized the graph structures for which EFX orientations always exist.
In contrast to EFX, when the items are goods, \citet{DBLP:journals/corr/abs-2409-13616} proved that an EF1 orientation always exists even in the general hypergraph orientation model. 
Without orientation constraints, 
\citet{DBLP:journals/corr/abs-2407-05139,DBLP:conf/sigecom/AmanatidisFS24,DBLP:journals/corr/abs-2506-09288} studied the computation of approximate EFX allocations in multi-graphs, where an edge may be allocated to non-incident edges. 
Recently, \citet{DBLP:journals/corr/abs-2506-20317} examined exact and approximate MMS allocations with additive and more general valuation functions.

\section{Preliminaries}
\subsection{The Orientation Model}
For any $k\in \mathbb{N}^+$, let $[k]=\{1,\ldots,k\}$. We study the model of allocating a set $E=\{e_1,\ldots,e_m\}$ of $m$ indivisible items to a set $N=\{1,\ldots,n\}$ of $n$ agents.
Each agent $i$ has an additive valuation function $v_i: 2^E \rightarrow \mathbb{R}$, that is, for any $S\subseteq E$, $v_i(S)=\sum_{e\in S}v_i(\{e\})$. 
For ease of notation, we write $v_i(e)$ instead of $v_i(\{e\})$. 
We call a valuation {\em binary} if $v_i(e) \in \{0,a\}$ for all items $e\in E$, where $a=1$ or $-1$.
For any $S\subseteq E$, let $|S|$ be the number of items in $S$.
An item is a good (resp., a chore) for an agent if it yields a non-negative (resp., non-positive) value. If the value of an item is zero for an agent, then the item can be a good or a chore.
In this paper, we say an instance is a \emph{goods-instance} (resp., \emph{chores-instance}) if all items are goods (resp., chores) for all agents.
We call it a \emph{mixed-instance} if an item can yield a positive value for one agent but a negative value for another.

In the general {\em orientation model}, each agent $i$ has a non-empty set of {\em relevant} items $E_i\subseteq E$. 
For each item $e\in E$, let $N_e =\{i \in N \mid e\in E_i\}$ be the set of agents to whom $e$ is relevant, and let $n_e = |N_e|$. 
We assume that $N_e$ is non-empty for every $e$, as otherwise, this item can be removed.
The orientation model is general and incorporates the classic unconstrained setting, where $E_i = E$ and $N_e=N$ for all $i\in N$ and $e\in E$.
Throughout the paper, we assume $v_i(e)=0$ for all $i$ and $e\in E\setminus E_i$.

We are also interested in two structured cases.
In the {\em simple graph} orientation model, $n_e = 2$ for all $e$ and $|E_i\cap E_j| \le 1$ for all $i\neq j$.
That is, the model can be described as a simple graph $G=(N,E)$, where each vertex is an agent and each edge is an item relevant to the two agents incident to this edge. 
Similarly, in the {\em multigraph} orientation model, $n_e = 2$ for every $e$, but the number of items in $E_i\cap E_j$ is not limited. 
That is, there may be multiple edges between any two agents (vertices).
In this paper, when the instance is a graph or multigraph, we use the terminologies vertex $i$ and agent $i$, and edge $e$ and item $e$, interchangeably.

\subsection{Solution Concepts}

In the orientation model, each item can only be allocated to an agent to whom the item is relevant. 
Formally, an {\em orientation} is denoted by $\pi = (\pi_1, \ldots, \pi_n)$, where $\pi_i \subseteq E_i$, and for any $i \neq j$, $\pi_i\cap \pi_j=\emptyset$ and $\bigcup_{i\in N} \pi_i=E$.
Below, we introduce proportional fairness for mixed instances. Let $\PROP_i=\sum_{e\in E_i} \frac{1}{n_e}\cdot v_i(e)$ be the (refined) proportional share for agent $i$.
\begin{definition}
    An orientation $\pi= (\pi_1, \ldots, \pi_n)$ is proportional (PROP) if for every agent $i\in N$, $v_i(\pi_i) \geq \PROP_i$.
\end{definition}

\begin{definition}
    An orientation $\pi= (\pi_1, \ldots, \pi_n)$ is proportional up to any item (PROPX) if for every agent $i\in N$, either (1) $v_i(\pi_i) \geq \PROP_i$, or (2)  $v_i(\pi_i\cup\{e\}) \geq \PROP_i$ for any $e \in E_i\setminus\pi_i$ such that $v_i(e)\geq 0$ and $v_i(\pi_i\setminus\{e\}) \geq \PROP_i$ for any $e\in\pi_i$ such that $v_i(e)\leq 0$.
\end{definition}

\begin{definition}
    An orientation $\pi= (\pi_1, \ldots, \pi_n)$ is proportional up to one item (PROP1) if for every agent $i\in N$, one of the following three holds: (1) $v_i(\pi_i) \geq \PROP_i$, (2)  $v_i(\pi_i\cup\{e\}) \geq \PROP_i$ for some item $e \in E_i\setminus\pi_i$, or (3) $v_i(\pi_i\setminus\{e\}) \geq \PROP_i$ for some item $e\in\pi_i$.
\end{definition}
A PROP orientation satisfies PROPX, which in turn satisfies PROP1. The above definition can be directly applied to goods- and chores- instances.
For example, for goods-instances, an orientation is PROPX if for every agent $i$, $v_i(\pi_i\cup\{e\}) \geq \PROP_i$ for any $e \in E_i\setminus\pi_i$;
For chores-instances, an orientation is PROPX if for every agent $i$, 
$v_i(\pi_i\setminus\{e\}) \geq \PROP_i$ for any $e\in\pi_i$. Regarding PROP1, the quantifier of $e$ in the prior two definitions is changed to existence. 

In this paper, we also consider equitability (EQ) and envy-freeness (EF). An orientation $\pi = (\pi_1, \ldots, \pi_n)$ is equitable (EQ) if $ {v_i}(\pi_i) =  {v_j}(\pi_j)$ for any two agents $i, j\in N$, and envy-free (EF) if $ {v_i}(\pi_i) \ge  {v_i}(\pi_j)$.
That is, in EQ orientations, the agents have the same value, while in EF orientations, every agent has the largest value in their own allocations. 
It is known that exact equitability or envy-freeness is not guaranteed to exist for indivisible items, and hence, we focus on their relaxations\footnote{EFX has been studied in, e.g.,
\cite{DBLP:conf/sigecom/0001FKS23,DBLP:conf/ijcai/0027WL024}, and thus we omit the corresponding discussion in the current paper.}.

\begin{definition}[EQX] 
An orientation 
$\pi = (\pi_1, \ldots, \pi_n)$ is equitable up to any item (EQX) if for any two agents $i, j\in N$, either (1) $v_i(\pi_i)=v_j(\pi_j)$, or (2) $ {v_i}(\pi_i) \geq  {v_j}(\pi_j\setminus\{e\})$ holds for every item $e\in \pi_j$ with $v_j(e)>0$, and $ {v_i}(\pi_i\setminus\{e\}) \geq  {v_j}(\pi_j)$ holds for every item $e\in \pi_i$ with $v_i(e) <0$.
\end{definition}

\begin{definition}[EQ1] 
An orientation $\pi = (\pi_1, \ldots, \pi_n)$ is equitable up to one item (EQ1) if for any two agents $i, j\in N$, one of the following three holds: (1) $v_i(\pi_i)=v_j(\pi_j)$, (2) there exists $e\in \pi_j$ such that $ {v_i}(\pi_i) \geq  {v_j}(\pi_j\setminus\{e\})$, or (3) there exists $e\in \pi_i$ such that $ {v_i}(\pi_i\setminus\{e\}) \geq  {v_j}(\pi_j)$.
\end{definition}

\begin{definition}[EF1] 
An orientation $\pi = (\pi_1, \ldots, \pi_n)$ is envy-free up to one item (EF1) if for any two agents $i, j\in N$, one of the following three holds: (1) $v_i(\pi_i)\ge v_i(\pi_j)$, (2) there exists $e\in \pi_j$ such that $ {v_i}(\pi_i) \geq  {v_i}(\pi_j\setminus\{e\})$, or (3) there exists $e\in \pi_i$ such that $ {v_i}(\pi_i\setminus\{e\}) \geq  {v_i}(\pi_j)$.
\end{definition}

In the relevance model, an item that is not relevant to an agent is assigned a value of 0 for that agent \citep{DBLP:journals/corr/abs-2409-13616,DBLP:conf/ijcai/0027WL024}.
While this approach is reasonable for goods, it becomes questionable for chores, as we may wish to allocate items that impose no cost on the agents.
In this sense, we argue that proportionality and equitability are more suitable fairness criteria in this context, since their definitions do not depend on an agent’s valuation of irrelevant items.

\section{Proportionality and Its Relaxations}
\label{sec::prop:meta}

In this section, we study PROP fairness and its relaxations, namely PROP1 and PROPX, and present results on their existence and on addressing the computational complexity of the associated decision problems. 

\subsection{PROP Fairness}\label{sec::prop}

Similar to the classic fair division of indivisible items, PROP is not always satisfiable even for simple graph instances. Consider an item and two agents, and the item is relevant and yields positive value to both agents.
A natural question is whether one can efficiently compute a PROP orientation when it exists.
Unfortunately, the answer is no even for restricted instances. 
In the following, if agents' total value of their relevant items is identical, we say the valuations are \emph{normalized}.

We derive the reduction from 2P2N-3SAT problem, known to be NP-complete \cite{DBLP:journals/dam/BermanKS07,DBLP:conf/rta/Yoshinaka05}. A 2P2N-3SAT instance contains a Boolean formula in conjunctive normal form consisting of the set of variables $X=\{x_1,\ldots,x_s\}$ and the set of clauses $C=\{C_1,\ldots,C_t\}$.
Each variable appears exactly twice as a positive literal and exactly twice as a negative literal in the formula, and each clause contains three distinct literals, i.e., $3t=4s$. Denote by $L=\bigcup_{j=1}^s\{x_j^1,x_j^2,\bar{x}_j^1,\bar{x}_j^2\}$ the set of literals and by $C(\ell)$ the clause that contains the literal $\ell$.

\begin{theorem}\label{thm::prop-complexity}
     Deciding the existence of PROP orientations is NP-complete, even for simple graphs where (1) all items are goods, valuations are normalized, and $v_i(e)\in \{1,2\}$ for all $i$ and $e\in E_i$; or all items are chores, valuations are normalized, and $v_i(e)\in \{-1,-2\}$ for all $i$ and $e\in E_i$.
\end{theorem}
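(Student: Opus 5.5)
We reduce from 2P2N-3SAT. Membership in NP is immediate: given an orientation, we can check $v_i(\pi_i)\ge \PROP_i$ for every $i$ in polynomial time. In a simple graph every item has $n_e=2$, so $\PROP_i=\frac12 v_i(E_i)$, and every agent must receive at least half of her total relevant value. Our plan is to make every agent's total equal to $4$, which also gives normalization. Then every agent needs value at least $2$ from her own bundle.

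\emph{Variable gadget.} For each variable $x_j$ we create four literal agents $x_j^1,x_j^2,\bar x_j^1,\bar x_j^2$. They are joined by the $4$-cycle $x_j^1-\bar x_j^1-x_j^2-\bar x_j^2-x_j^1$, and each literal agent values both of its cycle edges at $1$. Each literal agent $\ell$ also has one edge to its clause agent $C(\ell)$, which it values at $2$. Its total is then $4$, and it is satisfied in exactly two ways: it keeps its clause edge, or it gives the clause edge away and receives both of its cycle edges. We read ``gives its clause edge away'' as ``$\ell$ is true''. Two adjacent literal agents on the cycle cannot both give their clause edges away, because they would both need their common cycle edge. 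On the other hand, $x_j^1$ and $x_j^2$ can both do so, since they use the four distinct cycle edges; symmetrically, $\bar x_j^1$ and $\bar x_j^2$ can both do so. Hence the set of ``true'' literal agents is contained in $\{x_j^1,x_j^2\}$ or in $\{\bar x_j^1,\bar x_j^2\}$, which gives a consistent truth value. Conversely, any truth assignment can be realized: the false literals keep their clause edges, and the true literals take their cycle edges.

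\emph{Clause gadget.} The clause agent $C$ values each of its three literal edges at $1$. We add one extra edge of value $1$ to a dummy agent $d_C$, so that $C$'s total is $4$. We design a dummy gadget around $d_C$ that forces $d_C$ to give this edge to $C$ in every PROP orientation. Our first attempt is to let $d_C$ have the same profile as a literal agent: it values the edge to $C$ at $2$ and sits on an even cycle of dummies in which each dummy values its cycle edges at $1$ and has its own value-$2$ edge. We would then tune this so that the only PROP orientations send $d_C$'s edge to $C$ while every dummy still reaches $2$. In any PROP orientation, $C$ receives the dummy edge (value $1$) and therefore needs at least one literal edge, that is, at least one true literal. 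This gives the equivalence: the formula is satisfiable if and only if a PROP orientation exists. The graph is simple, since every pair of agents shares at most one edge, because the clauses contain distinct literals.

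\emph{Chores version.} We negate all values, so that each agent's total is $-4$ and she may lose at most $2$. A literal agent that takes its $-2$ clause edge must then shed both of its $-1$ cycle edges, and the same independence argument applies with the roles of ``keep'' and ``give'' swapped. A clause agent may absorb at most $2$ units, so it must shed at least one literal edge; we mirror the dummy gadget accordingly.

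The main obstacle is the dummy/padding gadget. It must force the extra edge onto $C$, keep every dummy's total at exactly $4$ with values in $\{1,2\}$, stay a simple graph, and always admit a PROP completion. This is the step we would work out first, trying small gadgets such as $K_4$ with mixed $1/2$ values, whose slack can be checked by hand.
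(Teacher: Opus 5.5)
\textbf{Gap: the padding gadget is never built, and the property you set for it is the wrong one.}

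Your variable gadget is sound and differs from the paper's. The paper uses two agents $i,\bar i$ joined by an edge worth $2$; whichever of them holds that edge may release both of its clause edges. You instead use one agent per occurrence on the $4$-cycle through $x_j^1,\bar x_j^1,x_j^2,\bar x_j^2$. Consistency then follows because every positive occurrence is adjacent to every negative one. The goods and chores versions of this gadget both work.

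You leave the padding gadget around $d_C$ unconstructed. The first attempt you sketch is not a closed gadget: the other dummies' value-$2$ edges have no specified endpoints. Worse, no gadget can do what you ask, namely force $(d_C,C)$ onto $C$. In the goods case, if the dummies can all reach $2$ with that edge at $C$, they still can with it at $d_C$, because only $d_C$'s bundle changes and it only grows. The chores case is symmetric. So your sentence ``in any PROP orientation, $C$ receives the dummy edge'' cannot be proved. It actually fails whenever some satisfying assignment makes two literals of $C$ true, since $C$ then reaches $2$ without the padding edge.

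\textbf{Fix: forcing is unnecessary, which is also how the paper's proof is organized.} In the reverse direction, $C$ values the padding edge at $1<2$, so $C$ needs a literal edge wherever the padding edge goes. In the chores version, $C$ can absorb at most two of its four $-1$ edges, so at least one literal edge leaves $C$ regardless.

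The gadget is therefore used only in the forward direction. It needs only to be simple, give every dummy total value $4$ with values in $\{1,2\}$, and admit a suitable orientation. For goods, every dummy must get at least $2$ while $(d_C,C)$ goes to $C$. For chores, every dummy must get at least $-2$ while $d_C$ keeps that edge. The paper's $11$-vertex gadget is built for exactly this feasibility property.

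A triangle already suffices:
\begin{itemize}
\item Join $d_C$ to two new agents $a,b$, and add the edge $(a,b)$.
\item Let $d_C$ value $(d_C,C)$, $(d_C,a)$, $(d_C,b)$ at $1,2,1$, and let $a$ and $b$ each value both of their edges at $2$.
\item For goods, orient $(d_C,a)\to d_C$, $(a,b)\to a$, $(d_C,b)\to b$; every dummy gets exactly $2$.
\item For chores, reverse these three edges and give $(d_C,C)$ to $d_C$; every dummy gets exactly $-2$.
\end{itemize}
With this gadget, and the forcing claim dropped, your reduction is complete for both cases.
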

\begin{proof}[Proof for the goods-instance]
The problem is in NP, as given an orientation, one can verify whether it is PROP or not in polynomial time.
Next, we derive the NP-completeness by a reduction from 2P2N-3SAT.

Given a 2P2N-3SAT instance, we create a goods-instance as follows:
\begin{itemize}
    \item for each clause $C_j$, create a vertex $c_j$ and 11 dummy vertices $d^1_j,\ldots, d^{11}_j$. Create edge $(d_j^1,c_j)$ with value 1 for $d_j^1$ and $c_j$. Moreover, the construction of edges with two endpoints being dummy vertices and their values are illustrated in Figure~\ref{fig:clause_gadget};
    \item for each clause $C_j$, create 3 edges based on the following rule: if $C_j$ contains literal $x_i^k$ (resp., $\bar{x}_i^{k'}$), create edge $(c_j, i)$ (resp., $(c_j, \bar{i})$) with value 1 for both incident agents, as illustrated in Figures \ref{fig:clause_gadget} and \ref{fig:variable_gadget} (edges without an endpoint). 
	  \item for each variable $x_i$, create two vertices $i$ and $\bar{i}$ and one edge $(i,\bar{i})$ with value 2 to both of them, as illustrated in Figure~\ref{fig:variable_gadget};
\end{itemize}
The created instance has $2s+12t$ vertices and $s+17t$ edges. For any $i$ and $e\in E_i$, $v_i(e)\in \{1,2\}$ holds. Moreover, each agent has a total value of 4 for her incident edges (normalized valuations), and thus, the proportional share of each agent is 2.

\begin{figure}[htbp]
    \centering
    \begin{subfigure}[t]{0.45\textwidth}
        \centering
        \includegraphics[width=\textwidth]{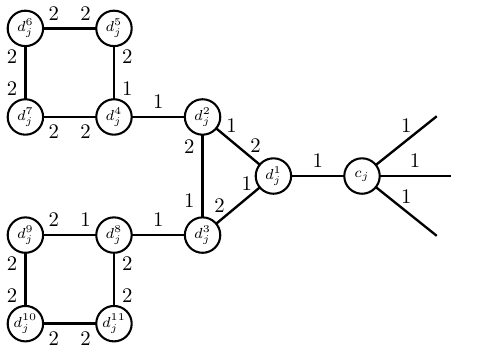}
        \caption{The illustration for goods-instance of the clause gadget for $C_j$. For each edge, if there is only one label, it represents the value of the edge to both endpoints. If there are two labels, the one closer to a vertex represents that vertex's value for the edge.}
        \label{fig:clause_gadget}
    \end{subfigure}
    \hfill
    \begin{subfigure}[t]{0.4\textwidth}
        \centering
        \includegraphics[width=\textwidth]{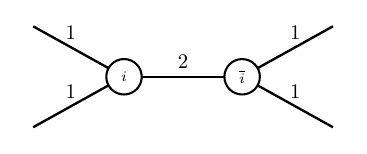}
        \caption{The illustration for goods-instance of the variable gadget for $x_i$. The label on each edge represents the value of the edge to both endpoints}
        \label{fig:variable_gadget}
    \end{subfigure}
    
    \caption{Combined illustration of the clause gadget and variable gadget.}
    \label{fig:combined_gadgets}
\end{figure}

Suppose that there exists a truth assignment that satisfies all clauses in $C$. For $c_j,d^1_j,\ldots,d^{11}_j$, observe that we can satisfy each $d^r_j$ with her proportional share even when $(d^1_j,c_j)$ is allocated to $c_j$; 
allocate edges in the cycle with 3 vertices $d^1_j,d^2_j,d^3_j$ in a clockwise manner, and allocate edges in the cycles with 4 vertices in an anticlockwise manner. 
Edges $(d^2_j,d^4_j)$ and $(d^3_j,d^8_j)$ are arbitrarily allocated to their incident vertices.
Thus, we create a partial assignment where each $d^r_j$ receives a value of at least 2 and each $c_j$ receives 1.

For each variable $x_i$, if $x_i$ is \texttt{True}, allocate $(i,\bar{i})$ to vertex $i$. Then allocate the other two edges incident to $i$ to the vertices corresponding to $C(x^1_i)$ and $C(x^2_i)$; recall that $C(x^1_i)$ refers to the clause that contains literal $x^1_i$. 
For vertex $\bar{i}$, allocate her the two edges with value 1. Similarly, if $\bar{x}_i$ is \texttt{True}, allocate $(i,\bar{i})$ to vertex $\bar{i}$ and allocate the other two edges incident to $\bar{i}$ to the vertices corresponding to $C(\bar{x}^1_i)$ and $C(\bar{x}^2_i)$, and for vertex $i$, allocate her the two edges with value 1.
At this point, for any $i\in [s]$, both $i$ and $\bar{i}$ receive value 2.

For each vertex $c_j$, as clause $C_j$ is satisfied, $c_j$ receives one additional incident edge besides $(d^1_j,c_j)$, resulting in a value of at least 2. Therefore, we create a PROP orientation.

Next, for the reverse direction, suppose that there exists a PROP orientation $\pi$. We now create a truth assignment of $\{x_i\}_{i\in [t]}$ as follows: if $(i,\bar{i})$ is allocated to vertex $i$, then set $x_i$ to \texttt{True}; and otherwise, if $(i,\bar{i})$ is allocated to vertex $\bar{i}$, set $\bar{x}_i$ to \texttt{True}.
Such a truth assignment ensures that exactly one of $x_i,\bar{x}_i$ is set to \texttt{True}, and hence, the truth assignment is valid. Next, we prove that the assignment satisfies all clauses.

For a contradiction, suppose that there exists a clause $C_{j'}$ not satisfied. By the created truth assignment, each of the three vertices corresponding to the three literals in $C_{j'}$ does not receive the edge with value 2.
Consequently, the three edges connecting $c_{j'}$ to variable vertices must not be allocated to $c_{j'}$ so that the three vertices corresponding to the three literals in $C_{j'}$ can satisfy PROP.
Hence, the value of $c_{j'}$ is at most 1, meaning that the orientation is not PROP, deriving the desired contradiction. Therefore, the created truth assignment satisfies all clauses.

As for the chores-instance, we use the same graph construction, but each agent's values on the edges are mapped to the corresponding negative values. 
If an edge has value 1 (resp., 2) for a vertex in the reduction for goods, we now change it to $-1$ (resp., $-2$). For any $i$ and $e\in E_i$, $v_i(e)\in \{-1,-2\}$ holds. Moreover, each agent has a total value of $-4$ for her incident edges (normalized valuations), and thus, the proportional share of each agent is $-2$.

Suppose there is a truth assignment that satisfies all clauses in $C$. For $c_j,d^1_j,\ldots,d^{11}_j$, observe that we can satisfy each $d^r_j$ her proportional share even when $(d^1_j,c_j)$ is allocated to $d^1_j$; 
allocate edges in the cycle with 3 vertices $d^1_j,d^2_j,d^3_j$ in an anticlockwise direction, and allocate edges in cycles with 4 vertices in a clockwise direction.
Edges $(d^2_j,d^4_j)$ and $(d^3_j,d^8_j)$ are arbitrarily allocated to their incident vertices.
At this point, each $d^r_j$ satisfies her proportional share, and her value is unchanged thereafter.

For each $x_i$, if $x_i$ is \texttt{True}, allocate $(i,\bar{i})$ to $\bar{i}$, and allocate $i$ the other two edges incident to her; if $\bar{x}_i$ is \texttt{True}, allocate $(i,\bar{i})$ to $i$, and allocate $\bar{i}$ the other two edges incident to her. 
At this point, for any $i\in [s]$, both $i$ and $\bar{i}$ receive value $-2$.
For each vertex $c_j$, since clause $C_j$ is satisfied, at least one incident edge other than $(d^1_j, c_j)$ is not allocated to her, which implies that her value is at least $-2$. Therefore, the created orientation is PROP.

Next, for the reverse direction, suppose that there exists a PROP orientation $\pi$. We create a truth assignment of $\{x_i\}_{i\in[t]}$ as follows: if $(i,\bar{i})$ is allocated to vertex $i$, then set $\bar{x}_i$ to \texttt{True}, and otherwise, set $x_i$ to \texttt{True}.
Such a truth assignment ensures that exactly one of $x_i,\bar{x}_i$ is set to \texttt{True}. Hence, the truth assignment is valid. 
Moreover, if some clause $C_{j'}$ is not satisfied, then in $\pi$, $c_{j'}$ receives a value at most $-3$, violating PROP. Therefore, the created truth assignment satisfies all clauses.
\end{proof}

To complement the hardness result, we show that for binary valuations and goods-instance (or chores-instance), one can compute in polynomial time a PROP orientation when it exists.

\begin{theorem}\label{thm::prop-compute-binary}
    For the general orientation model with binary additive valuations, one can in polynomial time determine whether a PROP orientation exists or not, and compute one if it exists.
\end{theorem}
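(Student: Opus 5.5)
The plan is to reduce the problem to a bipartite flow (or $b$-matching) computation. We treat goods-instances and chores-instances separately; binary valuations mean $v_i(e)\in\{0,1\}$ for all $i,e$, or $v_i(e)\in\{0,-1\}$ for all $i,e$.

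\textbf{Goods.} Agent $i$'s value is the number of items $e\in\pi_i$ with $v_i(e)=1$. Let $G_i=\{e\in E_i: v_i(e)=1\}$. The constraint becomes $|\pi_i\cap G_i|\ge \lceil \PROP_i\rceil$, where $\PROP_i=\sum_{e\in G_i}1/n_e$ is computable in polynomial time; the ceiling is valid because values are integers. Items with value $0$ to agent $i$ can be given to any relevant agent without affecting anyone's PROP condition. Since every $e$ has $N_e\neq\emptyset$, any item not used to meet a demand can simply be assigned to an arbitrary relevant agent.

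Hence a PROP orientation exists if and only if there is a partial assignment of items to agents in which each item goes to at most one agent $i$ with $e\in G_i$, and each agent $i$ receives at least $d_i:=\lceil\PROP_i\rceil$ items. Build the network source $\to e$ (capacity 1), $e\to i$ for $e\in G_i$ (capacity 1), and $i\to$ sink (capacity $d_i$). A PROP orientation exists if and only if the max flow equals $\sum_i d_i$. Integrality of max flow gives an integral assignment, and giving the remaining items arbitrarily yields the orientation.

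\textbf{Chores.} Let $C_i=\{e\in E_i: v_i(e)=-1\}$. The condition is $|\pi_i\cap C_i|\le u_i:=\lfloor \sum_{e\in C_i}1/n_e\rfloor$. Now every item must be assigned, so this is a feasibility problem with upper bounds. Items $e$ for which some $i\in N_e$ has $v_i(e)=0$ can be given to that $i$ at no cost, so remove them. Every remaining item must go to a relevant agent, each of whom has it in $C_i$. Build the network source $\to e$ (capacity 1), $e\to i$ for $i\in N_e$, and $i\to$ sink (capacity $u_i$). A PROP orientation exists if and only if the max flow saturates all source edges.

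\textbf{Mixed binary.} If "binary" is interpreted per agent (some agents with $\{0,1\}$ and others with $\{0,-1\}$), combine the two constructions into one flow problem with lower and upper bounds on agent arcs. Circulation with lower bounds is still polynomial and integral. Concretely:
\begin{itemize}
\item a goods-agent $i$ needs at least $d_i$ items from $G_i$;
\item a chores-agent $i$ may take at most $u_i$ items from $C_i$;
\item every item must be covered either by an agent valuing it at $0$ or $1$ (for whom it is free), or by a chores-agent counting it against $u_i$.
\end{itemize}
Item nodes then get demand exactly 1 when no relevant agent values them at zero or positively, and at most 1 otherwise.

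The main obstacle is making this equivalence airtight: checking that rounding to $d_i$ and $u_i$ is exact, and that free items never hurt anyone. After that, the result reduces to standard integral flow.
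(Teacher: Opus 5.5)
Your proposal is correct and follows essentially the paper's route. The paper builds a bipartite graph with $\lceil \PROP_i\rceil - a_i$ copies of each agent in the goods case, or $\lfloor |\PROP_i|\rfloor$ copies in the chores case, and asks for a $Y$-perfect (resp.\ $X$-perfect) matching; this is exactly your flow with agent capacities $d_i$ and $u_i$, where your version simply skips the paper's unneeded preprocessing of items valued at one by a single agent, and your lower-bounded flow for per-agent mixed binary valuations is a valid extension the paper does not treat.
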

\begin{proof}
	We begin with the goods-instance. If some item has a value of zero for all relevant agents, arbitrarily allocate the item to the relevant agents. 
	If some item has a value of one for exactly one relevant agent, allocate the item to that agent.
	At this point, let $\pi'$ be the current partial assignment, and for each $i$, let $a_i$ be the current value of agent $i$. Moreover, each of the unallocated items has a value of one for at least two agents relevant to that item. 
	
	Next, create a bipartite graph with two parts of vertices $X$ and $Y$. For each unallocated item with respect to $\pi'$, create a vertex in $X$. For each agent $i$, create a number $\lceil \PROP_i\rceil - a_i$ of vertices in $Y$. 
	For each such a vertex, connect it to the vertex $x\in X$ if agent $i$ is relevant to the item corresponding to $x$ and has value one for it. We prove below that a PROP orientation exists if and only if a $Y$-perfect matching exists in the created bipartite graph.
	
	For the ``if'' direction, based on the $Y$-perfect matching, one can extend $\pi'$ to another (possibly partial) assignment where each agent $i$ receives her proportional share. Moreover, unallocated items (if any) make agents weakly better off as items are goods. Thus, a PROP orientation exists.
	For the ``only if'' direction, denote by $Q\subseteq E$ the set of items with value one for at least two of their relevant agents. In the PROP orientation, each agent $i$ must receive value at least $\lceil \PROP_i \rceil - a_i$ when restricting to $Q$ as items $E\setminus Q$ can increase the value of agent $i$ by at most $a_i$.
	Hence, one can convert the assignment of $Q$ in the PROP to a $Y$-perfect matching.
	
	By applying the matching algorithm, one can determine whether a PROP orientation exists or not and compute one when it exists in time polynomial in $n$ and $m$ .
	
	As for the chores-instance, the idea is similar. First, if some item has a value of zero for some relevant agent, then allocate it to that agent. At this point, each of the unallocated items has a value of $-1$ for every relevant agent.
	Next, we create a bipartite graph $(X\cup Y)$. For each unallocated item, create a vertex in $X$.
	For each agent $i$, create a number $\lfloor |\PROP_i| \rfloor$ of vertices in $Y$. For each such agent vertex, connect it to the vertex $x\in X$ if agent $i$ is relevant to the item corresponding to $x$.
	By the arguments similar to those for the goods-instance, one can verify that a PROP orientation exists if and only if there exists a $X$-perfect matching in the created bipartite graph.
	Moreover, when an $X$-perfect matching exists, we can convert the matching to a PROP orientation.
\end{proof}

\subsection{PROP1 and PROPX Fairness}\label{sec::prop1}

Motivated by the above impossibility results, we now consider the relaxations of PROP.
We begin with PROP1 and show that PROP1 orientations always exist and are compatible with fractional Pareto optimal (fPO). 
An orientation $\pi=(\pi_1,\ldots,\pi_n)$ \emph{Pareto dominates} another orientation $\pi'=(\pi'_1,\ldots,\pi'_n)$ if for any $i\in [n]$, $v_i(\pi_i) \geq v_i(\pi'_i)$ and at least one inequality is strict.
An orientation $\pi$ is Pareto optimal (PO) if no integral orientation Pareto dominates it, and is fPO if no fractional orientation Pareto dominates it, where in an integral orientation, each item is fully allocated to one agent, and in a fractional orientation, a fraction of item can be assigned to some agent.
Formally, in a fractional orientation $\pi=(\pi_1,\ldots,\pi_n)$, $\pi_i=(\pi_{i,e})_{e\in E}$ where $0 \le \pi_{i,e} \le 1$ represents the portion of $e$ allocated to agent $i$.

\begin{theorem}
\label{thm:prop1:fPO}
For the general orientation model and the mixed-instance, we can compute a PROP1 and fPO orientation in polynomial time.
\end{theorem}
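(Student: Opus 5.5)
We reduce the statement to the classical unconstrained setting by encoding relevance through the valuations and the shares. Consider the fractional ``proportional'' allocation $x^*$ in which each item $e$ is split equally among the agents of $N_e$, i.e., $x^*_{i,e}=1/n_e$ for $i\in N_e$. By construction $v_i(x^*_i)=\PROP_i$ for every $i$. The plan is to (i) move to a fractional orientation that is fPO and Pareto-dominates (weakly) $x^*$, obtaining a fractional fPO orientation in which every agent gets at least $\PROP_i$; (ii) choose this orientation to be a vertex (basic solution) of an appropriate linear program so that its support, viewed as a bipartite agent--item graph, is a forest; (iii) round the fractional items along the forest so that each agent loses or gains at most one item relative to her fractional bundle, while preserving fPO.

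For step (i)--(ii), we will use the standard Fisher-market/weighted-welfare characterization of fPO for mixed items: a fractional orientation is fPO iff there are positive weights $w_i$ such that every item $e$ is allocated only to agents in $N_e$ maximizing $w_i v_i(e)$ over $N_e$ (for items with all values nonpositive the same condition applies with the maximizer being the least-costly weighted agent). Rather than guessing weights, we solve the LP: maximize $\sum_i v_i(x_i)$ subject to $v_i(x_i)\ge \PROP_i$ for all $i$, $\sum_{i\in N_e}x_{i,e}=1$, $x_{i,e}=0$ for $i\notin N_e$, $x\ge 0$. It is feasible (by $x^*$), and an optimal solution is fPO among orientations weakly dominating $x^*$; to get genuine fPO we instead take a lexicographic/perturbed objective, or argue that any Pareto improvement of an LP optimum would also be LP-feasible and strictly better in the objective, so the optimum is fPO. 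Taking a basic optimal solution, the number of fractional variables is bounded by the $m+n$ tight constraints, which gives that the support graph restricted to fractionally allocated items is a forest (after the usual cycle-cancelling argument preserving the objective and the share constraints, which only needs that along a cycle the LP is degenerate).

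For step (iii), we root each tree of the fractional support forest at an agent; every fractional item has a unique parent agent, and each non-root agent has a unique parent item. Allocate each fractional item to one of its child agents if it is a good for them and a chore... more concretely: give every fractional item $e$ to its parent agent if $e$ is a chore for the parent, and otherwise to one of its children. Then each agent $i$ either receives all her child items that are goods and loses at most her single parent item (which, if a good, can be added back for PROP1 condition (2)), or receives a single extra chore (parent-assigned chores are its children... ) whose removal restores $\PROP_i$ (condition (3)). Since items are only moved within the support of an fPO allocation, each agent still receives only items where she is a weighted maximizer, so the integral orientation remains fPO and is an orientation. All steps are LP solving and tree traversal, hence polynomial.

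The main obstacle is step (iii) for mixed items: one must orient each fractional item so that every agent deviates from her fractional share by at most one item in the ``right'' direction, which requires careful case handling of whether the parent item is a good or chore for each endpoint; I would first try the rule ``each agent gets all fractional child items that are goods for her, and a fractional chore goes to a child'' and check that each agent's only possible deficit comes from her parent item. A secondary subtlety is certifying fPO of the LP optimum, which I would settle via the weighted-welfare characterization with the dual variables of the LP supplying the weights.
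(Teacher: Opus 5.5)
Your route differs from the paper's, and once the two repairs below are made it works.

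\textbf{Comparison with the paper.} The paper starts from the equal split and runs a Sandomirskiy--Segal-Halevi style cycle elimination on the weighted directed consumption graph. That requires proving the full characterization of fPO in the orientation model, including the shortest-path construction of the weights $\lambda$. You replace this with one LP. Any optimum is fPO, because a Pareto improvement would be feasible and strictly better. The duals then supply the weights directly: with $\mu_i\ge 0$ on the share constraints and $p_e$ on the item constraints, complementary slackness gives $x_{i,e}>0\Rightarrow(1+\mu_i)v_i(e)=p_e=\max_{j\in N_e}(1+\mu_j)v_j(e)$. So $\lambda_i=1+\mu_i$ works, and you only need the easy direction (weights $\Rightarrow$ fPO), which also yields fPO of the rounded orientation immediately. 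The price is relying on an LP solver that returns a vertex rather than a combinatorial procedure. You also do not get the paper's more general lemma that any orientation can be weakly Pareto-improved to an fPO one with acyclic consumption graph. Rooting the bipartite agent--item forest is also the right object for general relevance. The paper's agent-only shared graph contains a triangle whenever an item is split among three agents.

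\textbf{First repair: acyclicity.} The bound of at most $m+n$ nonzero variables at a vertex does not make the support a forest. Take two agents and two items relevant to both, with $v_1=(2,1)$ and $v_2=(1,2)$. The equal split is a vertex of your feasible region: both share constraints are tight, and the four support columns are independent since $2\cdot 2\neq 1\cdot 1$. Its support is a 4-cycle; it is just not optimal. What excludes cycles is optimality. Along a support cycle $i_1,e_1,i_2,\dots,e_L,i_1$ with all $p_{e_k}\neq 0$, shift $c/p_{e_k}$ of $e_k$ from $i_k$ to $i_{k+1}$. Each agent's value then changes by $c/\lambda_{i_{k+1}}-c/\lambda_{i_{k+1}}=0$, so the cycle's columns are linearly dependent and cannot all lie in the support of a vertex. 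An item with $p_e=0$ is valued $0$ by all its holders, so their columns coincide and it cannot be shared at a vertex either. So take an optimal vertex, and give this argument in place of the ``degenerate'' remark.

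\textbf{Second repair: the rounding rule.} The rule you state ``more concretely'' (shared chore to its parent agent, shared good to a child) is inverted and can violate PROP1. An agent who is the parent of several shared items then absorbs all those chores and loses all those goods. For example, let agent $i$ value chores $e_1,e_2,e_3$ at $-1$, each also relevant to a separate agent $j_k$ who values it at $-1$. The equal split is the only feasible point. Rooted at $i$, your stated rule gives $i$ value $-3$, which is still $-2<\PROP_i=-3/2$ after removing one chore.

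The rule in your last paragraph is the correct one, and it is the paper's: a shared good goes to its parent agent, a shared chore to one child, and a neutral item anywhere. All holders of a shared item agree on its sign, since each $v_i(e)$ has the sign of $p_e$. Then each non-root agent deviates from her fractional bundle only on her unique parent item, and your case analysis applies: a lost parent good is covered by condition (2), a received parent chore by condition (3). The lost-good case genuinely needs condition (2). The paper's write-up treats only the chore case and asserts that each agent's value for goods weakly increases, which fails for a child whose parent item is a good.
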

    We start from a fractional proportional orientation created as follows: for any $e$ and any $i \in N_e$, let $\pi_{i,e} = \frac{1}{n_e}$. In this fractional orientation, each agent receives her proportional share.
Then we compute another fPO fractional orientation that Pareto improves the initial proportional orientation. Moreover, the new fractional orientation has an acyclic \emph{undirected consumption} graph: a bipartite graph in which vertices on one side are the agents, vertices on the other side are the items, and there is an edge between agent $i $ and item $e$ if and only if $\pi_{i,e}>0$.
To show the existence of such an fPO fractional orientation, we adapt the techniques in \cite{DBLP:journals/ior/SandomirskiyS22} to the orientation model.
Last, we round the fractional orientation to a PROP1 and fPO integral orientation.
The formal proof can be found in Appendix \ref{sec::prop:meta}.

We next study PROPX, a notion stricter than PROP1.
In sharp contrast, the results for PROPX are mostly negative. Particularly, a PROPX orientation does not always exist, even in the case of simple graphs with binary valuations, and determining whether such an orientation exists is computationally intractable.

\begin{proposition}
   PROPX orientations may not exist, even for simple graphs and binary valuations.
\end{proposition}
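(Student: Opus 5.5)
We prove the proposition with an explicit goods-instance on the simple graph $K_4$ with $(0,1)$-binary valuations, and show by a short case analysis that no orientation is PROPX. The mechanism is the quantifier in the PROPX definition for goods: condition (2) ranges over \emph{every} unreceived relevant item $e$ with $v_i(e)\ge 0$, including items of value $0$. So an agent who misses a zero-valued relevant item must already satisfy $v_i(\pi_i)\ge \PROP_i$. Zero-valued edges therefore act as ``must take all of them, or reach full PROP'' constraints, which we make incompatible.

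\textbf{Construction.} Take agents $1,2,3,4$ forming $K_4$, with one item per edge. Let $M=\{(1,2),(3,4)\}$ be a perfect matching. Each agent values its matching edge at $1$ and its two other incident edges at $0$. Then $n_e=2$ for every edge, so $\PROP_i=\tfrac12$ for every agent $i$.

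\textbf{Step 1: a local condition for each agent.} We claim agent $i$ satisfies PROPX if and only if $i$ receives its matching edge or receives both of its zero-valued edges.
\begin{itemize}
\item If $i$ receives its matching edge, then $v_i(\pi_i)=1\ge\tfrac12$, so PROPX holds.
\item If $i$ misses its matching edge but receives both zero edges, then the only unreceived relevant item is the matching edge. Adding it gives $1\ge \tfrac12$, so PROPX holds.
\item If $i$ misses its matching edge and some zero edge $e$, then $v_i(\pi_i)=0<\tfrac12$. Moreover $e$ is unreceived with $v_i(e)=0\ge 0$ and $v_i(\pi_i\cup\{e\})=0<\tfrac12$, so PROPX fails.
\end{itemize}

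\textbf{Step 2: the conflict.} Edge $(1,2)$ goes to one endpoint, so some $a\in\{1,2\}$ misses it. Likewise some $b\in\{3,4\}$ misses $(3,4)$. By Step 1, $a$ must receive all its zero edges, and so must $b$. In $K_4$, the edge $(a,b)$ lies outside $M$, so it is a zero edge for both $a$ and $b$. It cannot be given to both, so one of them violates PROPX. Hence no orientation is PROPX.

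The only delicate point is Step 1: one must read the PROPX definition correctly, in particular that zero-valued unreceived items count in condition (2). Once that equivalence is established, Step 2 is immediate, since any two vertices lying on different edges of a perfect matching of $K_4$ are adjacent.
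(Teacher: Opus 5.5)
Your proof is correct and is essentially the paper's argument. Both use a small $\{0,1\}$-valued goods instance on a simple graph in which an agent missing its unique valued edge must absorb all its zero-valued edges (because unreceived zero-valued edges count in PROPX), and these forced absorptions clash: in the paper, agent $1$'s absorption of $(1,3),(1,4)$ starves one of agents $3,4$, while in your $K_4$ with a perfect matching the two losers $a,b$ clash directly on $(a,b)$. Only the necessary half of your Step 1 is used, so the argument survives even the literal mixed-item definition of PROPX, whose removal clause would reject your second bullet; and, as in the paper, negating the valuations gives a chores counterexample as well.
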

\begin{proof}
    Let us begin with the goods-instance and consider the instance illustrated in Figure~\ref{fig:PROPX+no}. Fix an arbitrary orientation $\pi$, and due to symmetry, assume that $(1,2)$ is allocated to agent 2. Now we focus on agents 1,3, and 4. 
    To ensure that agent 1 satisfies PROPX, both edges $(1,3)$ and $(1,4)$ must be allocated to agent 1. As a consequence, one of agent 3 and agent 4 receives $\emptyset$ and violates PROPX.
    
    \begin{figure}[H]
    \centering
    \includegraphics[width=0.32\linewidth]{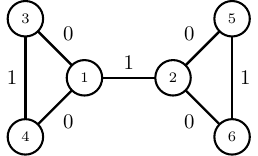}
    \caption{The illustration of the goods-instance where PROPX orientations do not exist. The label on each edge represents the value of the edge to both endpoints}
    \label{fig:PROPX+no}
\end{figure}
    As for chores-instance, we consider the same graph construction and transform agents' valuations into their negatives. By arguments similar to those of goods, one can verify that no orientation is PROPX.
\end{proof}

\begin{theorem}\label{thm::complexity-propx}
    For both goods- and chores-instances, deciding the existence of PROPX orientations is NP-complete, even for simple graphs and binary valuations.
\end{theorem}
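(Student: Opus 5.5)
The plan is to reduce from 2P2N-3SAT, as in Theorem~\ref{thm::prop-complexity}. I will treat goods first and obtain chores by negating values and reversing orientations. Membership in NP is immediate, since PROPX can be checked edge by edge in polynomial time.

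\textbf{Step 1: what PROPX requires of a single vertex.} In a simple graph with binary goods valuations, let agent $i$ have $k_i$ incident edges of value $1$, so that $\PROP_i = k_i/2$.
\begin{itemize}
\item If every edge $i$ does \emph{not} receive has value $1$ to $i$, PROPX reduces to $v_i(\pi_i)\ge k_i/2-1$.
\item If some unreceived edge has value $0$ to $i$, then $e$ may be chosen as that edge, and PROPX reduces to exact PROP: $v_i(\pi_i)\ge k_i/2$.
\end{itemize}
So a zero-valued edge that $i$ is forced \emph{not} to receive raises $i$'s threshold from $\lceil k_i/2\rceil-1$ to $\lceil k_i/2\rceil$. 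For example, a degree-$3$ vertex with three value-$1$ edges needs only one of them in the first case but two in the second. The reduction will exploit this switch between thresholds.

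\textbf{Step 2: gadgets.}
\begin{itemize}
\item \emph{Forcing gadget.} I need a small ``hungry'' subgraph $H$ with a single attachment edge $f$, valued $0$ by the outside vertex and $1$ by the vertex of $H$. It should be designed so that PROPX inside $H$ is achievable if and only if $f$ is oriented into $H$. I would build $H$ from triangles and $4$-cycles, in the spirit of the dummy vertices $d^r_j$ of Theorem~\ref{thm::prop-complexity}. Each vertex of $H$ has degree $3$ with threshold $1$ or $2$, and a counting argument (sum of thresholds equals the number of edges of $H$) makes $f$ necessary.
\item \emph{Variable gadget.} For variable $x_i$, create vertices $i$ and $\bar i$ joined by an edge valued $1$. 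Give $i$ two literal edges to the clause vertices $C(x_i^1)$ and $C(x_i^2)$, and similarly for $\bar i$. Attach a forcing gadget to each of $i$ and $\bar i$. Then each of them has $k=3$ and threshold $2$, so whichever of $i,\bar i$ loses the edge $(i,\bar i)$ must keep both of its literal edges. The literal whose vertex wins $(i,\bar i)$ is set \texttt{True}, and its two literal edges are free to go to the clause vertices.
\item \emph{Clause gadget.} Each clause vertex $c_j$ has exactly its three literal edges, each valued $1$, and no zero edge. Its threshold is $3/2-1$, so it needs at least one literal edge, that is, a true literal.
\end{itemize}

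\textbf{Step 3: correctness, both directions.} From a satisfying assignment, orient each gadget as prescribed and check every vertex. Conversely, from a PROPX orientation, read the truth value off the edge $(i,\bar i)$. The forced thresholds show that only true-literal vertices can release literal edges, and each clause vertex holding a literal edge certifies that its clause is satisfied. For chores, I would negate all values. With chores, a received $0$-valued chore plays the role of the forcing edge: the definition then requires $v_i(\pi_i)\ge\PROP_i$ outright. I will reuse the same graph with the reversed orientation of the gadgets, checking the analogous threshold table.

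\textbf{Main obstacle.} The delicate step is designing the forcing gadget $H$ so that no internal orientation can satisfy PROPX without taking $f$, while one with $f$ does. The danger is that zero-valued edges are free for goods, so the outside vertex would happily absorb $f$ unless $H$ genuinely needs it. I would first try a degree-$3$ cubic-like component whose edge count equals the sum of its vertices' minimal thresholds minus one, so that the one missing unit must come from $f$. I would then verify by a small case analysis that all thresholds are met once $f$ is inside.
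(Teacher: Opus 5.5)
Your variable gadget does not work, and the forward direction (satisfying assignment $\Rightarrow$ PROPX orientation) fails with it. Take vertex $i$ with value $1$ on $(i,\bar i)$ and on both literal edges, plus the forced-away zero edge $f$. By your own Step 1, $\PROP_i=3/2$ and $i$ must receive at least two value-$1$ edges. So the vertex that wins $(i,\bar i)$ still has to keep one of its literal edges: only one of them, not both, is free for the clauses.

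Summing over the pair makes this fatal. Vertices $i$ and $\bar i$ need four of the five edges formed by $(i,\bar i)$ and their four literal edges, so each variable sends at most one literal edge to a clause vertex. Each of the $t$ clause vertices needs at least one, and $3t=4s$ gives $t>s$. Hence the instance you build has \emph{no} PROPX orientation for any formula, satisfiable or not. The negated chores version fails the same way. A variable vertex holding the zero-valued chore $f$ may keep at most one of its three $(-1)$-edges, so each pair absorbs at most one literal edge. That leaves at least $3s$ literal edges for clause vertices, which can hold at most $2t=8s/3$ in total.

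The missing idea, which the paper uses, is to make the literal edges themselves the zero-valued edges on the variable side. Let $i$ value $(i,\bar i)$ at $1$ and its literal edges at $0$, while $c_j$ values them at $1$, so $\PROP_i=1/2$. The winner of $(i,\bar i)$ meets $\PROP_i$ outright and can release both literal edges. The loser has value $0$ and violates PROPX as soon as it does not receive a zero-valued literal edge, so it must keep both. The paper uses its dummy gadget only at the clause side, pushing the zero-valued edge $(c_j,d^1_j)$ away from $c_j$ for goods and onto $c_j$ for chores, which raises $c_j$'s requirement to exact PROP.

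With these variable valuations, your simpler threshold-$1$ clause vertex (three value-$1$ literal edges) already suffices, and no forcing gadget is needed. The chores analogue also goes through: the receiver of $(i,\bar i)$ then cannot take any zero-valued literal edge, and each clause vertex can keep at most two of its three $(-1)$-edges. Separately, your proposed design for $H$ cannot work as stated. With binary goods valuations a degree-$3$ vertex never has threshold $2$, since that would require three value-$1$ edges and an unreceived zero-valued edge at the same time.
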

\begin{proof}
The decision problem is clearly in NP. We next derive the NP-completeness by a reduction from 2P2N-3SAT. 

Given a 2P2N-3SAT instance, we create a goods-instance as follows:
\begin{itemize}
    \item for each clause $C_j$, create a vertex $c_j$ and 6 dummy vertices $d_j^1,\ldots, d_j^6$. Create edges $(c_j,d^1_j)$ and $(c_j,d^4_j)$; moreover, create edges with both endpoints being dummy vertices, as illustrated in Figure \ref{fig:PROPX+clause};
    \item for each clause $C_j$, create three more edges: if $C_j$ includes literal $x_i^k$ (resp., $\bar{x}_i^{k'}$), add an edge $(c_j,i)$ (resp. $(c_j,\bar{i})$ with value 1 for both incident agents, as illustrated in Figure \ref{fig:PROPX+clause}.
    \item for each variable $x_i$, create two vertices $i$ and $\bar{i}$, and an edge $(i,\bar{i})$ with value $1$ for both $i,\bar{i}$. 
    Moreover, apart from $(i,\bar{i})$, each $i$ and $\bar{i}$ values all other incident edges at 0.
\end{itemize}

\begin{figure}
\centering
        \centering
        \includegraphics[width=0.4\linewidth]{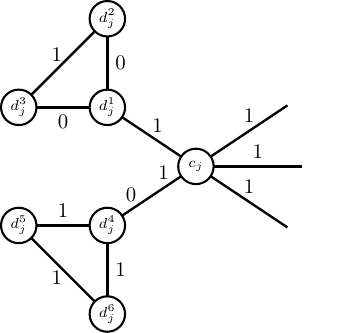} 
        \caption{The illustration for goods-instance of the clause gadget for $C_j$.
        For each edge, if there is only one label, it represents the value of the edge to both endpoints. If there are two labels, the label closer to a vertex represents that vertex's value for the edge.}
        \label{fig:PROPX+clause}
\end{figure}

The created instance has $2s+7t$ vertices and $s+11t$ edges; moreover, each agent's valuation is binary. 
We first claim that in a PROPX orientation (if it exists), for each $c_j$, edge $(c_j,d^1_j)$ must be allocated to $d^1_j$.
For a contradiction, assume that this is not the case. We now focus on agents $d^1_j,d^2_j,d^3_j$. Without loss of generality, assume edge $(d^2_j,d^3_j)$ is allocated to $d^2_j$ in the orientation.
Then in order to make $d^3_j$ satisfy PROPX, we need to allocate $(d^1_j,d^3_j)$ to her, which then makes $d^1_j$ violate PROPX, as $d^1_j$ has value zero in the orientation and edge $(d^1_j,d^3_j)$ (with value zero) is not allocated to her but her proportional share is $\frac{1}{2}$.

We now show that there exists a truth assignment satisfying all clauses if and only if there exists a PROPX orientation. 
Suppose that there exists a truth assignment satisfying all clauses. If $x_i$ is set to be \texttt{True}, we allocate $(i,\bar{i})$ to $i$. Then, we allocate to $\bar{i}$ all the edges that are currently unallocated and incident to her. Moreover, assign the two unallocated edges incident to $i$ to the vertices corresponding to $C(x_i^1)$ and $C(x_i^2)$.
Similarly, if $\bar{x}_i$ is set to be {\texttt{True}}, we orient these edges in the reverse direction. Then agents $i$ and $\bar{i}$ satisfy PROPX.

For each vertex $c_j$ and its dummy vertices, allocate $(c_j,d^1_j)$ to $d^1_j$ and allocate $(c_j,d^4_j)$ to $c_j$. One can allocate the edge with both endpoints being dummy vertices in a way such that each $d_j^r$ satisfies PROPX.
At this point, $c_j$ should receive at least one more edge to achieve PROPX. Note that the truth assignment satisfying clause $C_j$, and thus, the edge connecting $c_j$ to the vertex corresponding the true literal in $C_j$ is allocated to $c_j$.
Therefore, the created orientation is PROPX.

Next, for the reverse direction, suppose that there exists a PROPX orientation $\pi$. We now create a truth assignment as follows: if $(i,\bar{i})$ is allocated to vertex $i$, we set $x_i$ to \texttt{True}; otherwise we set $\bar{x}_i$ to \texttt{True}. 
Such a truth assignment ensures that exactly one of
$x_i$ and $\bar{x}_i$ is set to \texttt{True}, and hence, the truth assignment is valid. 
For a contradiction, we suppose that there exists a clause $C_j$ that is not satisfied. Then since we have proved that $(c_j,d^1_j)$ must be allocated to $d^1_j$, vertex $c_j$ has value at most 1 in $\pi$, as $C_j$ is not satisfied. Accordingly, $c_j$ does not satisfy PROPX, deriving the desired contradiction.

Next, we prove for the case of chores. Create the graph identical to that for the goods-instance. For agents' valuations, let edge $(c_j,d^1_j)$ result in value $-1$ for $d^1_j$ and value zero for $c_j$ for all $j$.
    For each of the other edges, if their value for a vertex is one in the goods-instance, change it to $-1$; if their value for a vertex is zero in the goods-instance, it remains the same.
    
    We first claim that in a PROPX orientation (if it exists), for each $c_j$, edge $(c_j, d^1_j)$ must be allocated to $c_j$. Assume for the contradiction that this is not the case. 
    We now focus on agents $d^1_j,d^2_j,d^3_j$. Without loss of generality, assume edge $(d^2_j,d^3_j)$ is allocated to $d^2_j$ in the orientation.
    Then in order to make $d^2_j$ satisfy PROPX, we need to allocated $(d^2_j,d^1_j)$ to $d^1_j$, which means that $d^1_j$ cannot receive $(c_j, d^1_j)$.
    Then in a PROPX orientation, the value of $c_j$ must be at most $-2$ as her proportional share is $-2$ and she always receives an edge with value zero. We prove below that there exists a truth assignment satisfying all clauses if and only if there exists a PROPX orientation.

    Suppose that there exists a truth assignment satisfying all clauses. If $x_i$ is set to \texttt{True}, we allocate $(i,\bar{i})$ to $\bar{i}$. Then we allocate $i$ the two edges with value zero for her and allocate the two unallocated edges incident to $\bar{i}$ to the vertices corresponding to $C(\bar{x}^1_i)$ and $C(\bar{x}^2_i)$.
    Similarly, if $\bar{x}_i$ is set \texttt{True}, we orient these edges in the reverse direction. Then at this point, both agents $i$ and $\bar{i}$ satisfy PROPX.

    For each vertex $c_j$ and its dummy vertices, allocate $(c_j,d^1_j)$ to $c_j$ and allocate $(c_j,d^4_j)$ to $d^4_j$. One can allocate the edges with both endpoints being dummy vertices in a way such that each $d^r_j$ satisfies PROPX. 
    Then all edges are allocated and $c_j$ has value at least $-2$ as the edge connecting $c_j$ to the vertex corresponding to the true literal in $C_j$ is not allocated to $c_j$. Therefore, the created orientation is PROPX.

    Next for the reverse direction, suppose that there exists a PROPX orientation $\pi$. We now create a truth assignment as follows: if $(i,\bar{i})$ is allocated to $i$, set $\bar{x}_i$ to \texttt{True}; otherwise, set $x_i$ to \texttt{True}.
    Such a truth assignment ensures that exactly one of $x_i$ and $\bar{x}_i$ is set to \texttt{True}, and hence, the truth assignment is valid.
    For a contradiction, suppose that there exists a clause $C_j$ that is not satisfied. Then $c_j$ has value at most $-3$ as each edge connecting $c_j$ to the variable vertices with literals in $C_j$ must be allocated to $c_j$.
    As edge $(c_j,d^1_j)$ is allocated to $c_j$, vertex $c_j$ violates PROPX, a contradiction.
\end{proof}

\subsection{Strongly PROP1 and MMS Fairness}

In this section, we focus on multigraph orientation and goods-instances.
We first introduce a stronger version of PROP1. 

\begin{definition}
    For goods-instances, an orientation $\pi= (\pi_1, \ldots, \pi_n)$ is strongly proportional up to one item (SPROP1) if for every agent $i\in N$, there exists $e^*\in E_i$ such that
    \[
    v_i(\pi_i) \geq \frac{1}{2}\sum_{e\in E\setminus\{e^*\}} v_i(e).
    \]
\end{definition}
For additive valuations without relevance constraints, it is known that an SPROP1 allocation may not always exist.
However, as we will prove below, an SPROP1 orientation does exist in the multigraph orientation model. 
Consider the following algorithm: 
\begin{itemize}
    \item Start with an arbitrary vertex $i$ with $E_i\neq \emptyset$, and allocate edge $e_{i,j} \in\arg\max_{e\in E_i}v_i(e)$ to vertex $i$. Remove $e_{i,j}$ from $E_i$ and $E_j$.
    
    
    \item Repeat the above procedure for vertex $j$, if $E_{j}\neq \emptyset$, until all edges are allocated.
    If $E_{j} = \emptyset$, the vertex can be chosen arbitrarily. 
\end{itemize}
That is, for every agent $i$, if one of her incident edges is allocated to some other agent, she can immediately receive the best remaining edge.
Thus, 
\[
    v_i(\pi_i) \geq \frac{1}{2}\sum_{e\in E\setminus\{e^*\}} v_i(e),
\]
where $e^* \in \arg \max_{e\in E_i} v_i(e)$ and the orientation is SPROP1.
It is proved in \cite{DBLP:journals/corr/abs-2506-20317} that this algorithm returns an MMS fair allocation. 
We next show that SPROP1 orientation is a stronger fairness notion than MMS allocations \cite{DBLP:journals/corr/abs-2506-20317}. 
Let $\Pi_n(E)$ be the set of all $n$-partitions of $E$.
Note that a partition in $\Pi$ may not correspond to an orientation.  
The maximin share (MMS) of agent $i\in N$ is defined as
\[
\MMS_i=\max_{(A_1,\ldots,A_n) \in \Pi_n(E)}\min_{1\le j\le n} v_i(A_j).
\]
An allocation $(X_1,\ldots,X_n)$ is MMS fair if $v_i(X_i) \ge \MMS_i$ for every agent $i \in N$.

\begin{proposition}
    For the general orientation model and the goods-instance with $n\ge 3$, every SPROP1 orientation is an MMS allocation.
\end{proposition}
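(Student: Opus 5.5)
We need to show that if $v_i(\pi_i)\ge \frac12\sum_{e\in E\setminus\{e^*\}}v_i(e)$ for some $e^*\in E_i$, then $v_i(\pi_i)\ge \MMS_i$ when $n\ge3$. Throughout, valuations are nonnegative and $v_i(e)=0$ for $e\notin E_i$, so $v_i(E)=v_i(E_i)$. Write $V=v_i(E_i)$ and $M=\max_{e\in E_i}v_i(e)$. Since $v_i(e^*)\le M$, the SPROP1 condition gives
\[
v_i(\pi_i)\ \ge\ \tfrac12\bigl(V-v_i(e^*)\bigr)\ \ge\ \tfrac12(V-M).
\]
The plan is therefore to prove the purely combinatorial bound $\MMS_i\le \frac12(V-M)$ whenever $n\ge 3$, which uses only agent $i$'s additive valuation.

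To prove the bound, fix any partition $(A_1,\ldots,A_n)\in\Pi_n(E)$ and let $e_{\max}$ be an item with $v_i(e_{\max})=M$. Exactly one part contains $e_{\max}$, say $A_1$.

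\emph{Counting step.} The parts $A_2,\ldots,A_n$ avoid $e_{\max}$, and since $n\ge 3$ there are at least two of them. Their values sum to at most $V-M$, so the smaller of $v_i(A_2),v_i(A_3)$ is at most $\frac12\bigl(v_i(A_2)+v_i(A_3)\bigr)\le \frac12(V-M)$. Hence $\min_j v_i(A_j)\le\frac12(V-M)$ for every partition, and taking the maximum over partitions gives $\MMS_i\le\frac12(V-M)\le v_i(\pi_i)$, as required.

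If $E_i$ is empty or all values are zero, both sides are $0$ and the claim is trivial. There is no real obstacle here. The only points needing care are that the bound must use $M$ rather than $v_i(e^*)$ (the SPROP1 witness $e^*$ need not be the maximum item, but replacing it by $M$ only weakens the lower bound), and that $n\ge3$ is exactly what guarantees two parts besides the one containing $e_{\max}$. For $n=2$ the argument fails, consistent with the restriction in the statement.
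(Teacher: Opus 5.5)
Your proof is correct, and it streamlines the paper's argument rather than following it step for step. The paper takes $e^*\in\arg\max_{e\in E_i}v_i(e)$ directly, leaving implicit your remark that the maximal item is the weakest SPROP1 witness. It then splits on whether $v_i(e^*)\le\MMS_i$.
In the first case it chains $\frac12 v_i(E\setminus\{e^*\})\ge\frac13 v_i(E)\ge\frac1n v_i(E)\ge\MMS_i$. The first of these inequalities tacitly requires $v_i(E)\ge 3v_i(e^*)$, which follows from $v_i(e^*)\le\MMS_i\le v_i(E)/n$ and $n\ge 3$.
In the second case it asserts $\MMS_i\le\frac{1}{n-1}v_i(E\setminus\{e^*\})\le\frac12 v_i(E\setminus\{e^*\})$. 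This is exactly what your counting step establishes; you average over two of the $n-1$ parts avoiding the largest item instead of all of them.
That bound never uses the case hypothesis, so your single unconditional argument shows the case distinction is unnecessary. The paper's first case only supplies a second route through the proportional bound $\MMS_i\le v_i(E)/n$. Your version is shorter and makes explicit both the reduction to the maximal item and the counting behind the key inequality.
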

\begin{proof}
    To prove the proposition, it suffices to prove 
    \[
    \frac{1}{2}\sum_{e\in E\setminus\{e^*\}} v_i(e) \ge \MMS_i,
    \]
    where $e^* \in \arg \max_{e\in E_i} v_i(e)$.
    If $v_i(e^*) \le \MMS_i$, then 
    \[
    \frac{1}{2}\sum_{e\in E\setminus\{e^*\}} v_i(e) \ge \frac{1}{3}\sum_{e\in E} v_i(e)\ge \frac{1}{n}\sum_{e\in E} v_i(e) \ge \MMS_i,
    \]
    If $v_i(e^*) > \MMS_i$, then 
    \[
    \MMS_i \le \frac{1}{n-1}v_i(E\setminus \{e^*\})\le  \frac{1}{2}v_i(E\setminus \{e^*\}).
    \]
    Recall that $n\ge 3$ in the above two inequalities, and thus the proposition is proved.
\end{proof}

\section{Equitability and Its Relaxations}
\label{sec:EQ}

In this section, we focus on equitability-related notions that have been widely studied in the unconstrained setting \citep{DBLP:conf/ijcai/FreemanSVX19,DBLP:journals/aamas/SunCD23,DBLP:conf/aaai/BarmanBPP24,DBLP:journals/eor/SunC25}, but remain underdeveloped in the orientation model.

In the unconstrained setting, EQX/1 allocations can be achieved through the greedy algorithm. However, our results indicate that neither EQ1 nor EQX is guaranteed to be satisfiable in the orientation model. Even worse, deciding their existence is NP-complete.
For completeness, we also provide the complexity results on the notion of EQ.

\begin{theorem}\label{thm::EQ-ak}
    Determining whether an EQ orientation exists on a simple graph in both goods- and chores-instance is NP-complete.
\end{theorem}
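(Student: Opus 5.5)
Membership in NP is immediate: given an orientation, we check in polynomial time that all agents have the same value. For hardness, I would reduce from a problem whose certificates are orientations with a prescribed in-degree at each vertex. Since EQ only asks that all agents receive the same value, the natural source is the problem of deciding whether a graph has an orientation in which every vertex has in-degree exactly $k$. With unit values this is polynomial (it reduces to flow), so the hardness has to come from weights, i.e., from a partition-type problem. Two sources fit:
\begin{itemize}
\item \emph{3-Partition}, which is strongly NP-hard and so allows polynomially bounded integer values;
\item the 2P2N-3SAT reduction style already used in Theorem~\ref{thm::prop-complexity}.
\end{itemize}

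\textbf{Goods-instance, via 3-Partition.} I would work from an instance with integers $a_1,\ldots,a_{3k}$, bin size $B$, and $B/4<a_\ell<B/2$.
\begin{itemize}
\item \emph{Bin vertices.} Create $k$ bin vertices $b_1,\ldots,b_k$.
\item \emph{Number vertices.} For each number $a_\ell$ create a number vertex $u_\ell$, adjacent to every bin vertex by an edge $(u_\ell,b_p)$. The bin values this edge at $a_\ell$ and $u_\ell$ values it at $0$. Every edge is then a distinct vertex pair, so the graph is simple.
\item \emph{Target value.} The goal is that in an EQ orientation every bin receives exactly $B$ and every $u_\ell$ receives exactly $B$ as well.
\item \emph{Private gadget for $u_\ell$.} Each $u_\ell$ must get value $B$ exactly when it ``gives away'' all but one of its bin edges. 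To arrange this, attach to $u_\ell$ a pendant gadget worth $B$, together with a compensating edge. The simplest version gives $u_\ell$ value $B$ for one dedicated edge to a fresh dummy vertex $w_\ell$; $w_\ell$ must in turn have its own way to reach exactly $B$.
\end{itemize}

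\textbf{Controlling the common value (main obstacle).} The hardest step is making sure the common value is forced to be $B$ and that dummy vertices can always reach exactly $B$ but never anything else. I would first try attaching to each dummy a triangle of fresh dummies, where every edge is worth $B$ to both endpoints. A cyclic orientation of the triangle gives each member exactly $B$, and any other orientation gives some member $0$ or $2B$.

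Forcing works as follows. Any isolated pair of dummies joined by a single edge of value $B$ forces the common value to lie in $\{0,B\}$: one endpoint gets $B$, the other $0$, which is impossible under EQ. So I would instead use a cycle, which admits the all-$B$ assignment. A cycle of length $\ge3$ whose edges are valued $B$ by both endpoints forces each vertex to receive $0$, $B$, or $2B$ with average exactly $B$. Equality then forces a common value of $B$. Because the gadget is rigid, the common value $c$ in any EQ orientation must be $B$.

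Each bin must then collect number edges summing to exactly $B$. Every number edge goes either to a bin or to its number vertex, which values it at $0$. Each number vertex's value must equal $B$, so I should make $u_\ell$ value its bin edges $0$ and its dedicated dummy edge $B$. Now the total value the bins can obtain is $\sum_\ell a_\ell\cdot(\text{assigned}) \le kB$. Since the bins need exactly $kB$ in total, each $a_\ell$ edge is given to exactly one bin. This requires $u_\ell$ to keep all but one of its parallel-in-spirit bin edges; those edges are worth $0$ to $u_\ell$, so that is fine. Each bin sums to $B$, and the bounds $B/4<a_\ell<B/2$ give triples. Conversely, a 3-partition yields an EQ orientation directly. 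I will double-check that the dummy edge of $u_\ell$ cannot be taken by the dummy side in a way that breaks the rigidity, by making $w_\ell$ value it $0$ and giving $w_\ell$ its own triangle.

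\textbf{Chores-instance.} I would mirror the goods construction by negating all values. The triangle gadgets still force each vertex to receive $-B$. Each bin must then absorb number edges of cost exactly $B$, and each edge still goes to exactly one bin, since the number vertex values it at $0$ and cost has to be accounted for. Here I need care: for chores the bins must \emph{take} total cost $kB$, which can be enforced by having $u_\ell$ value its bin edges at $-B$ with no room beyond the one forced dummy chore. Strong NP-hardness keeps all values polynomial, so the reduction runs in polynomial time.
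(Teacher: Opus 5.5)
Your reduction breaks at the step where you conclude that each $a_\ell$ goes to exactly one bin. In your construction $u_\ell$ has a separate edge to every bin and values all of them at $0$. Its EQ constraint is met by the dummy edge $(u_\ell,w_\ell)$ alone, so nothing restricts how many of its $k$ bin edges it gives away: $a_\ell$ may be collected by no bin, one bin, or all $k$ bins. Hence the bound $\sum_\ell a_\ell\cdot(\text{assigned})\le kB$ is false; the bins can collect up to $k\sum_\ell a_\ell=k^2B$. The only remaining constraint is that each bin independently picks some subset of $\{a_1,\dots,a_{3k}\}$ summing to $B$, possibly the same subset for every bin.

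For example, take $k=3$, $B=100$ and the numbers $30,30,40,27,27,27,39,40,40$. The only triple with sum $100$ is $\{30,30,40\}$, and there are only two $30$'s, so this is a no-instance. Yet in your graph every bin $b_p$ can take the three edges $(u_\ell,b_p)$ for the vertices of $30$, $30$ and one $40$. Every $u_\ell$ takes its dummy edge and keeps its other bin edges, and the triangles are oriented cyclically, so every agent gets exactly $B$. (In fact your instance has an EQ orientation iff some subset of the $a_\ell$ sums to $B$. That is decidable in polynomial time precisely in the polynomially-bounded regime you invoked strong NP-hardness for, so no repair of the analysis can save this construction.)

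The chores version has the same defect, and valuing bin edges at $-B$ does not fix it. If $u_\ell$ keeps the dummy chore, it can keep no bin edge, so every number is charged to all $k$ bins. If $w_\ell$ absorbs the dummy chore (which it values at $0$), $u_\ell$ keeps one bin edge and $a_\ell$ is charged to $k-1$ bins rather than one.

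The missing idea is that the number vertex's own value must count the bin edges it keeps, so that EQ pins that count. Let $u_\ell$ value each of its $k$ bin edges at $B/(k-1)$ (after scaling to integers) and drop the dummy edge. Keep your triangle gadget to force the common value $B$; that part of your argument is fine. Then $u_\ell$ keeps exactly $k-1$ edges, so exactly one bin receives $a_\ell$, and negating all values handles chores.

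This is exactly the mechanism of the paper's proof, which is the case $k=2$ with \textsc{Partition}. Vertices $1,\dots,n$ are each joined to $p$ and $q$. Each $i$ values both of its edges at $\frac12$, and $p,q$ value them proportionally to $x_i$, normalized so that each agent's total is $1$. EQ forces every $i$ to keep exactly one edge: two edges give $i$ value $1$, exceeding $v_p(\pi_p)$, while none gives value $0$ although $p$'s value is positive. This fixes the common value at $\frac12$ without any gadget, and makes the edges received by $p$ and $q$ a partition. Strong NP-hardness is also unnecessary here, since NP-completeness only needs a polynomial-time reduction and binary-encoded \textsc{Partition} suffices.
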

\begin{proof}
    We derive reductions from the \ssc{Partition} problem: given a set $\{x_1,\ldots,x_n\}$ of $n$ integers whose sum is $2T$, can $[n]$ be partitioned into two sets $I_1$ and $I_2$ such that $\sum_{j\in I_1} x_j=\sum_{j\in I_2}x_j = T$?

    The decision problem is in NP as given an orientation, one can decide whether it is EQ in polynomial time. We derive the reduction from the \ssc{partition} problem and begin with the goods-instance.
\begin{figure}
  \centering
\includegraphics[width=0.35\textwidth]{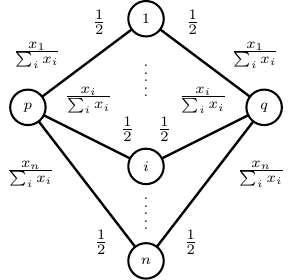}
  \caption{The illustration of the goods-instance for the reduction from EQ. The label closer to a vertex represents that vertex’s value for the edge.}\label{fig:EQ}
\end{figure}
We create a simple graph with $n+2$ vertices $\{1,\ldots,n,p,q\}$ and $2n$ edges. The created graph and the valuations of agents are illustrated in Figure \ref{fig:EQ}. One can verify that the instance is normalized as the total value of items for each agent is equal to one.

We claim that in an EQ orientation, each vertex $i$ with $i\in [n]$ must receive exactly one incident edge. If some vertex $i'\in [n]$ receives two incident edges, the value of agent $i'$ is one, but the value of $p$ is less than 1, violating EQ.
If some vertex $i'\in [n]$ receives no edge, then the value of agent $i'$ is zero, but the value of $p$ is positive, violating EQ.
Thus in an EQ orientation, the value of each agent must be $\frac{1}{2}$. Then the set of items received by $p$ and $q$ in an EQ orientation map a solution of the \ssc{partition} problem. Therefore, an EQ orientation exists if and only if the answer to the \ssc{partition} instance is a yes-instance.

For the reduction of the chores-instance, we create the same graph but negate agents' valuations. By similar arguments, we can prove that in an EQ orientation, each vertex $i\in [n]$ must receive exactly one incident edge. Therefore, an EQ orientation exists if and only if the answer to the \ssc{partition} instance is a yes-instance.
\end{proof}

Next, we show that EQ1 orientations do not always exist, which presents a sharp contrast to the classical setting without relevance constraints where the existence of EQ1 allocations is guaranteed.

\begin{proposition}
    An EQ1 orientation is not guaranteed to exist even for simple graphs.
\end{proposition}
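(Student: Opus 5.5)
We plan to give an explicit small counterexample on a simple graph and then show by a counting argument that no orientation of it is EQ1. First a remark on the definition. Condition (2) for the ordered pair $(i,j)$ only removes an item from $\pi_j$, and condition (3) only removes one from $\pi_i$. So EQ1 fails as soon as some ordered pair has an agent $i$ who is \emph{strictly worse off} than $j$ even after $j$ loses her best item, while $i$ has no item of positive value to give up. We read ``for any two agents $i,j$'' as ranging over ordered pairs, as in the classical definition.

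The construction combines two ingredients. The first is a part of the graph with more edges than vertices, all valued positively by both endpoints, so that by pigeonhole some agent must receive at least two positively valued items. The second is an agent whose value is $0$ in every orientation. Concretely, we take the complete graph $K_4$ on agents $\{1,2,3,4\}$, where each of its $6$ edges has value $1$ for both endpoints. We add one pendant agent $p$ joined to agent $1$ by an edge $e_p$ with $v_p(e_p)=0$ and $v_1(e_p)=0$. This keeps $E_p\neq\emptyset$, and $n_e=2$ and $|E_i\cap E_j|\le 1$ hold throughout, so it is a simple graph instance. It is a goods-instance; negating the values would give an analogous chores example if desired.

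Now fix any orientation $\pi$. The six edges of $K_4$ are split among four agents, so some $i\in\{1,2,3,4\}$ receives at least two of them and has $v_i(\pi_i)\ge 2$. Agent $p$ always has $v_p(\pi_p)=0$, whether or not she receives $e_p$. Consider the ordered pair $(p,i)$ and the three options. Condition (1) fails since $0\neq v_i(\pi_i)$. Condition (2) fails because removing any single item from $\pi_i$ leaves value at least $1>0=v_p(\pi_p)$. Condition (3) fails because either $\pi_p=\emptyset$, or removing $e_p$ leaves $p$ at value $0<v_i(\pi_i)$. Hence $\pi$ is not EQ1, and since $\pi$ was arbitrary, no EQ1 orientation exists.

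The only step that needs care is the quantifier issue just mentioned. For the reversed pair $(i,p)$ condition (2) trivially holds, so the violation must be witnessed with the zero-value agent $p$ in the first position. We will state explicitly that EQ1 is required for every ordered pair, which is the standard convention. Everything else is the routine pigeonhole count $6>4$ together with checking that the instance is a simple graph.
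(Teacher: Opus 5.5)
Your proof is correct and is essentially the paper's argument: pigeonhole on the six edges of $K_4$ forces some agent to hold at least two positively valued edges, and that agent (as $j$) violates all three EQ1 conditions against an agent (as $i$) who always has value $0$. This is the same ordered-pair reading the paper uses. The one difference is how the zero-value agent is produced: the paper uses a separate edge between agents $5$ and $6$, so whichever of them misses it gets nothing, and all values stay strictly positive ($\frac{1}{3}$ per $K_4$ edge); you instead attach a pendant edge valued $0$ by both endpoints.
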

\begin{proof}
    Consider the example with identical valuations shown in Figure \ref{fig:EQ1-nonexist}. 
    \begin{figure}[h]
  \centering
  \includegraphics[width=0.4\linewidth]{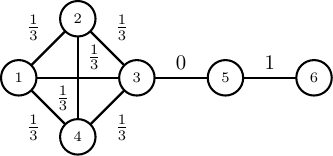}
  \caption{The illustration of the goods-instance where EQ1 does not exist. The label on each edge represents the value of the edge to both endpoints.}
  \label{fig:EQ1-nonexist}
\end{figure}
    In any orientation, one of the agents $5$ and $6$ receives a value of 0; suppose that this agent is $5$. However, as there are 6 edges among $1$, $2$, $3$, and $4$, by the pigeonhole principle, at least one of them will receive two or more edges. Suppose this agent is $1$, then we have $v_1(\pi_1\setminus\{e\}) = \frac{1}{3} > v_5(\pi_5) = 0$ for all $e\in \pi_1$, which means that EQ1 does not hold between $1$ and $5$. 
    As for the chores-instance, we use the same graph while changing each agent's value on each edge to the corresponding negative value. In that case, we have $v_1(\pi_1\setminus \{e\}) = -\frac{1}{3} < v_5(\pi_5) = 0$ for all $e\in \pi_1$, which violates EQ1.
\end{proof}

We also investigate the problem of deciding the existence of EQ1 orientations for a given instance. This decision problem proves to be computationally intractable.

\begin{theorem}\label{thm::EQ1-ak}
Determining whether an EQ1 orientation exists on a simple graph in both goods-
and chores-instance is NP-complete.
\end{theorem}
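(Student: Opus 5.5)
The problem is in NP, since checking EQ1 for a given orientation takes polynomial time. For hardness I would reduce from \ssc{Partition} and reuse the construction behind Theorem~\ref{thm::EQ-ak}. I would add a gadget that forces the common value to be pinned down, so that EQ1 collapses to (near-)exact equality.

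\textbf{Rigidity gadget.} The idea comes from the non-existence example for EQ1. There, a vertex forced to receive value $0$ makes every other agent's value after removing its best item at most $0$. I would attach such a pendant structure, for instance two agents joined by a single edge (each valuing it at $1$), so that one of them always ends with value $0$. This caps every other agent $i$: EQ1 between $i$ and the zero agent requires $v_i(\pi_i\setminus\{e\})\le 0$ for some $e$, so in a goods instance each other agent holds at most one positively valued item.

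\textbf{Main construction.} Keep the partition vertices $1,\dots,n$ from Figure~\ref{fig:EQ}, each adjacent to $p$ and $q$. Because each agent can hold at most one positive item, at most one edge per vertex $i$ goes to $i$, and $p$ and $q$ each receive at most one item. This is too restrictive as it stands. So I would instead encode the partition through pendant edges that impose lower bounds, replacing $p$ and $q$ by many copies. Concretely, I would use a large scaling so that the EQ1 slack (one item) is negligible relative to the structure, and choose values so that the single largest item of each agent is a fixed gadget edge whose removal is forced.

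The intended effect is this. Every agent must receive one designated heavy edge $h$ (value $H$) plus a light part. EQ1 between any two agents then amounts to comparing light parts up to removing the heavy item, and tuning $H$ turns this into a two-sided bound that forces $p$'s light part to equal $T$ up to a slack smaller than $1$. With integer data this gives exactly $T$. The reverse direction builds the orientation directly from a partition: $p$ takes the edges for $I_1$, $q$ takes those for $I_2$, and each $i$ takes its remaining edge, after which EQ1 is checked pairwise. The chores case follows by negating all values, as in Theorem~\ref{thm::EQ-ak}, with the roles of removal swapped.

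\textbf{Main obstacle.} The hard step is designing values so that EQ1's one-item slack is exactly absorbed by a forced item and nothing else. Small instances must not satisfy EQ1 through a different choice of which item is removed. I would first try giving every agent a unique maximum-value gadget edge and then arguing by cases on who receives each gadget edge.
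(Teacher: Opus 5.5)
Your proposal does not yet contain a reduction. The ``main construction'' only lists intended effects (copies of $p$ and $q$, a large scaling, a tuned heavy edge of value $H$), and the step you flag as the main obstacle is the substance of the proof. The mechanism you describe also cannot produce exact equality. For goods with $v_i(\pi_i)<v_j(\pi_j)$, the pair $i,j$ satisfies EQ1 exactly when $v_j(\pi_j)-v_i(\pi_i)\le\max_{e\in\pi_j}v_j(e)$. If every agent holds a forced heavy item of value $H$ that is its most valuable item, this condition reads $L_j-L_i\le H$. So the heavy item does not absorb the one-item slack; it \emph{is} the slack. Scaling $H$ up makes all comparisons vacuous. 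Since $H$ must be at least the value of every partition edge, the tolerated gap is always at least one unit of the integer data, so it can never pin $p$'s light part to exactly $T$. Pairwise EQ1 comparisons only give equality up to one item, so they cannot yield the exact two-sided bound you want. Your zero-valued pendant gadget gives the opposite, upper-bound constraint, which, as you note, is far too strong.

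The missing idea is to use EQ1 only one-sidedly and to get exactness from a budget argument. Suppose that in \emph{every} orientation some agent $a$ satisfies $v_a(\pi_a\setminus\{e\})\ge\tau$ for all $e\in\pi_a$. Then every agent $b$ must have $v_b(\pi_b)\ge\tau$, since otherwise all three EQ1 conditions fail for the pair $(b,a)$. The paper gets such an agent from a $K_4$ on $w,x,y,z$, with edges $(w,z),(x,y)$ worth $\frac34$ and the other four worth $\frac18$. Six edges on four vertices force some agent to take at least two, and it keeps at least $\frac18$ after removing any one of them. Pendant agents $i',p',q'$, whose only positively valued edges are $(i,i'),(p,p'),(q,q')$, must then take those edges. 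Hence each $i$ must keep at least one of $(p,i),(q,i)$, each worth $\frac18$ to $i$. With $v_p((p,i))=v_q((q,i))=\frac{x_i}{4\sum_k x_k}$, agents $p$ and $q$ together collect at most $\frac14$, while each needs at least $\frac18$. So each gets exactly $\frac18$, which gives a partition. Conversely, a partition yields an orientation in which every agent has value at least $\frac18$ and at most $\frac18$ after removing its most valuable item; this is EQ1. For this, orient the $K_4$ so that two agents get $\frac34+\frac18$ and two get $\frac18$. The chores case is the negation, with the $K_4$ forcing every value to be at most $-\frac18$.
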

\begin{proof}
    \begin{figure}[h]
    \centering
\includegraphics[width=0.65\textwidth]{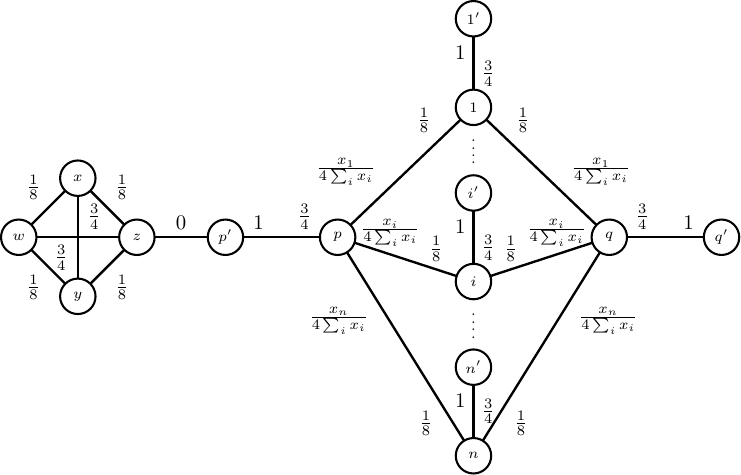}
\caption{Illustration of the goods-instance for reduction from EQ1. For the edge with one label, the label is the value of that edge for both endpoints. If an edge has two labels, the label closer to a vertex represents that vertex’s value for the edge.}
    \label{fig:EQ1NPC}
\end{figure}

The decision problem is in NP as given an orientation, one can decide whether it is EQ1 in polynomial time. We derive the reduction from the \ssc{partition} problem and begin with the goods-instance.

We create a goods-instance of a simple graph, as illustrated in Figure \ref{fig:EQ1NPC}:
\begin{itemize}
    \item create vertices $1,\ldots,n$ and vertices $1',\ldots,n'$, and for each $i\in [n]$, create an edge $(i,i')$;
    \item create vertices $p,p',q,q'$, and create edges $(p,p')$ and $(q,q')$. Moreover, for each $i\in [n]$, create edges $(p,i)$ and $(q,i)$;
    \item create vertices $w,x,y,z$ and edges such that these four vertices form $K_4$ (i.e., a complete graph with 4 vertices). Moreover, create edge $(z,p')$.
\end{itemize}
We define the agents' valuations as follows:

\begin{itemize}
        \item for each agent $i\in [n]$, ${v_i}((i,i')) = \frac{3}{4}$, $ {v_{i'}}((i,i')) = 1$, $ {v_i}((p,i)) =  {v_i}((q,i)) = \frac{1}{8}$, and ${v_p}((p,i)) =  {v_q}((q,i)) = \frac{x_i}{4\sum_ix_i}$;
        \item ${v_p}((p,p')) =  {v_q}((q,q')) = \frac{3}{4}$,  ${v_{p'}}((p,p')) = {v_q'}((q,q')) = 1$, and ${v_{p'}}((z,p')) = 0$;
        \item $ {v_w}((w,z)) = {v_z}((w,z)) = {v_x}((x,y)) = {v_y}((x,y)) = \frac{3}{4}$, and ${v_a}((a,b)) =  {v_b}((a,b)) = \frac{1}{8}$ for each pair of vertices $(a,b)\in\{(w,x),(x,z), (z,y), (y,w)\}$.  
    \end{itemize}

We first claim that in an EQ1 orientation $\pi$, each agent has a value of at least $\frac{1}{8}$. Let us focus on $w,x,y,z$, and in any orientation (and hence in $\pi$), there must be an agent receiving two edges, with one having a value of $\frac{1}{8}$ and another having a value of $\frac{3}{4}$.
Thus, after removing the edge with value $\frac{3}{4}$, that agent still has value $\frac{1}{8}$. As a result, in an EQ1 orientation $\pi$, each agent should receive a value of at least $\frac{1}{8}$.
Then edges $(p',p)$ and $(q',q)$ must be allocated to $p'$ and $q'$ respectively, and for each $i\in [n]$, edge $(i',i)$ should be allocated to $i'$.
Moreover, for each $i\in [n]$, vertex $i$ should receive at least one (indeed exactly one) of $(i,p)$ and $(i,q)$. Then the total value of the edges that can be allocated to $p$ and $q$ is $\frac{1}{4}$, which makes in $\pi$, the value of $p$ and $q$ should be $\frac{1}{8}$. Thus, there exists a subset of $\{x_1,\ldots,x_n\}$ with the total value $T$.
Therefore, there exists an EQ1 orientation if and only if the \ssc{partition} instance is a yes-instance.

As for the chores-instance, we consider the same graph creation but negate agents' valuations.
Similarly, in any orientation, there must be an agent among $w,x,y,z$ receiving at least one edge with value $-\frac{1}{8}$ and one edge with value $-\frac{3}{4}$.
Thus, in an EQ1 orientation $\pi$, every agent should receive a value of at most $-\frac{1}{8}$. Then edges $(p',p)$ and $(q',q)$ must be allocated to $p'$ and $q'$, respectively. Moreover, for each $i\in [n]$, edge $(i',i)$ should be allocated to $i'$.
Then one can verify that the unallocated edges can make both $p$ and $q$ have a value of no greater than $-\frac{1}{8}$ if and only if the \ssc{partition} instance is a yes-instance.
Therefore, we establish an equivalence between the existence of EQ1 orientation and a yes-instance of the \ssc{partition} problem.
\end{proof}

The non-existence of EQ1 orientations directly implies that EQX orientations are not guaranteed to exist, as EQX is a notion stricter than EQ1, while the computational hardness of the EQ1 decision problem does not directly extend to EQX.
Below, we establish the hardness of the decision problem concerning EQX.

\begin{theorem}\label{thm::EQx}
    Determining whether an EQX orientation exists on a simple graph in both goods- and chores-instances is NP-complete.
\end{theorem}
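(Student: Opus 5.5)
Membership in NP is immediate, since EQX of a given orientation can be checked in polynomial time. For hardness I would reduce from \ssc{Partition}, reusing the architecture of the proof of Theorem~\ref{thm::EQ1-ak}: a forcing gadget fixes a positive (resp.\ negative) common value level that every agent must essentially meet, and a core gadget with vertices $p,q$ attached to $[n]$ then needs an exact split of the $x_i$.

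\textbf{Goods.} EQX is stronger than EQ1, so I would again use a $K_4$ gadget on $w,x,y,z$ with mixed values, chosen so that in every orientation some gadget agent keeps positive value after removing any single positive item. For instance, take unequal values with the two $\frac{3}{4}$-edges forming a perfect matching. One gadget agent necessarily receives two edges, and at least one of them is $\frac18$. I would then argue (or tune the numbers until) that in every orientation some agent $a$ satisfies $v_a(\pi_a\setminus\{e\})\ge \frac18$ for all $e\in\pi_a$.

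Then, exactly as before, every other agent $j$ must have $v_j(\pi_j)\ge \frac18$, i.e.\ $j$ cannot be empty-handed. This forces:
\begin{itemize}
\item edges $(i,i')$ to go to $i'$,
\item edges $(p,p')$ and $(q,q')$ to go to $p'$ and $q'$,
\item each $i\in[n]$ to take at least one of $(p,i),(q,i)$.
\end{itemize}
The new ingredient is making $p$ and $q$ need exactly half. Since EQX compares to the gadget agents' bundles minus any positive item, I would design the gadget so that its agents' values are tightly determined (all equal to $\frac18$, or bundles of the form $\{\frac18\}$ and $\{\frac18,\frac34\}$). The EQX condition ``$v_a(\pi_a\setminus\{e\})\le v_p(\pi_p)$ for all positive $e$'' then yields the lower bound $v_p\ge\frac18$. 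Combined with $v_p+v_q\le\frac14$, this gives $v_p=v_q=\frac18$, which is a partition.

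For the converse, given a partition, I orient the edges so that every agent except the $i'$, $p'$, $q'$ gets exactly value $\frac18$, or a bundle whose removals drop below. The $i'$ agents get value $1$ from a single item, which is fine for EQX since removing it gives $0$. I then check the pairwise EQX conditions case by case.

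\textbf{Main obstacle.} The main obstacle is the $i$-vertices: each gets exactly one edge of value $\frac18$ and must not violate EQX against $p$ or $q$, whose bundles contain several small items. Removing any one item from $p$'s bundle gives less than $\frac18$, so this is consistent, but the gadget values may need rescaling (e.g.\ making the $p$-edges' values sum to exactly $2\cdot\frac18$ with $x_i$-proportional weights) for all pairwise checks to pass. I would fix the numbers by first writing the constraints as inequalities and choosing the values afterward.

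\textbf{Chores.} I negate all values and mirror the argument. The gadget forces every agent to be at most $-\frac18$ without the ability to drop items. This forces $p$ and $q$ to each carry chores worth exactly $-\frac18$, which again encodes a partition.
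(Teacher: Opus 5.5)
\textbf{The forcing gadget gives the wrong threshold, so the reduction fails as written.} EQX for the pair $(j,a)$ requires $v_j(\pi_j)\ge v_a(\pi_a\setminus\{e\})$ for \emph{every} positive $e\in\pi_a$. The binding case is therefore removing $a$'s \emph{least} valuable item, not its most valuable one as in the EQ1 argument.

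\begin{itemize}
\item With the EQ1 gadget you propose ($\frac34$-edges forming a perfect matching), some agent $a$ is forced to hold a $\frac34$-edge together with a $\frac18$-edge. After dropping the $\frac18$-edge, $a$ still has at least $\frac34$.
\item So every agent $j$ needs $v_j(\pi_j)\ge\frac34$ (or $v_j(\pi_j)=v_a(\pi_a)\ge\frac78$), not merely $\frac18$.
\item Then each $i'$ must take $(i,i')$, which leaves $i$ with at most $v_i((p,i))+v_i((q,i))=\frac14$.
\item Hence the constructed instance has no EQX orientation at all, whatever the \textsc{Partition} instance. The negated chores version fails the same way: everyone is pushed to $\le-\frac34$, while $i$ stays $\ge-\frac14$.
\end{itemize}

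Consequently your bound ``$v_p\ge\frac18$'' from bundles $\{\frac18,\frac34\}$ is wrong, and the yes-instance direction cannot be carried out. The obstacle you flag ($i$ versus $p,q$) is harmless. The real obstacle is every agent versus the gadget agent holding two items. Also, an agent with your forcing property has value strictly above $\frac18$, so the gadget agents' bundle values cannot ``all equal $\frac18$''. ``Tuning the numbers afterwards'' leaves the one quantitative step the reduction depends on unresolved.

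\textbf{The missing idea.} An agent holding several items imposes its \emph{largest} single-removal remainder as a lower bound on everyone (an upper bound, in the chores case). The gadget must make that remainder equal the target threshold in the intended orientation.

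Within your architecture, one fix is to give all six $K_4$ edges value $\frac18$ (resp.\ $-\frac18$) for both endpoints.
\begin{itemize}
\item Some gadget agent then receives two edges, which forces every agent to $\ge\frac18$ (resp.\ $\le-\frac18$).
\item Orienting $K_4$ with out-degrees $(2,2,1,1)$ gives gadget values $\frac14,\frac14,\frac18,\frac18$ (negated for chores).
\item These are EQX-compatible with the intended bundles: $i'$, $p'$, $q'$ each hold one nonzero item, each $i$ holds one $\frac18$-edge, and $p,q$ hold many small edges summing to $\frac18$.
\item The rest of your argument then goes through.
\end{itemize}

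The paper takes a different route. A triangle $(j,k_j,k'_j)$ for each $j\in[n]\cup\{p,q\}$, with values $\epsilon$ and $\frac12\pm\epsilon$, locally forces each $j$ to hold exactly one $\epsilon$-edge. It also forces each $i$ to hold exactly one of $(p,i),(q,i)$. Then $i$'s bundle $\{\epsilon,\frac12-\epsilon\}$ has largest remainder $\frac12-\epsilon$. This forces $v_p(\pi_p),v_q(\pi_q)\ge\frac12-\epsilon$ and hence an exact partition. The paper thus exploits precisely the ``every item'' quantifier that breaks your gadget.
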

\begin{proof}
The decision problem is in NP. 
We derive the reduction from the \ssc{partition} problem: whether a set of integers $x_1,\ldots,x_n$ can be divided into two subsets with equal sums.

We first prove for goods and create a goods-instance of a simple graph, as illustrated in Figure \ref{fig:EQXNPC}:
\begin{itemize}
    \item create vertices $1,\ldots,n$ and vertices $p,q$; for each $j\in [n]\cup\{p,q\}$, create two dummy vertices $k_j,k'_j$;
    \item for each $j\in [n]\cup\{p,q\}$, create edges $(j,k_j), (j,k'_j), (k_j,k'_j)$. Moreover, for each $j\in [n]$, create edges $(j,p)$ and $(j,q)$.
\end{itemize}
\begin{figure}[h]
    \centering
    \includegraphics[width=0.6\textwidth]{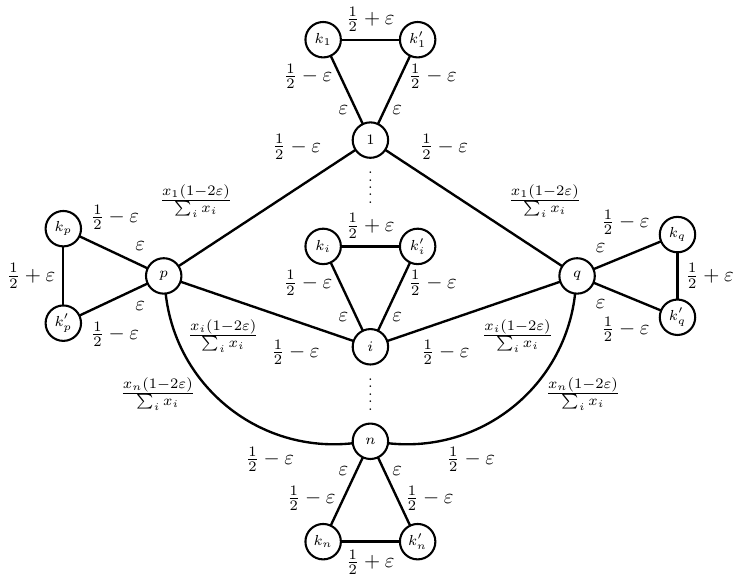}
    \caption{Illustration of the goods-instance for reduction from EQX. The label closer to a vertex represents that vertex’s value for the edge.}
    \label{fig:EQXNPC}
\end{figure}
We define the agents' valuations as follows:
\begin{itemize}
    \item for each $j\in [n]\cup \{p,q\}$, let $ {v_j}((j,k_j)) =  {v_j}((j,k_j')) = \epsilon$, $ {v_{k_j}}((j,k_j)) =  {v_{k_j'}}((j,k_j')) = \frac{1}{2}-\epsilon$, and $ {v_{k_j}}((k_j,k_j')) =  {v_{k_j'}}((k_j,k_j')) = \frac{1}{2} + \epsilon$;
    \item for each $i\in [n]$, let ${v_i}((p,i)) =  {v_i}((q,i)) = \frac{1}{2}-\epsilon$;
    \item for $p$ and $q$, let ${v_p}((p,i)) = v_q((q,i)) = \frac{x_i}{\sum_i x_i}(1 - 2\epsilon)$ for all $i\in [n]$,
\end{itemize}
where $\epsilon>0$ is arbitrarily small. As the total value of each agent is 1, the created instance is normalized.

We claim that in an EQX orientation $\pi$, each $j\in [n] \cup \{p,q\}$ must receive exactly one of the edges $(j,k_j)$ and $(j,k_j')$. Suppose not. If $j$ receives neither $(j,k_j)$ nor $(j,k_j')$, then EQX is violated between $k_j$ and $k'_j$;
assume without loss of generality that $(k_j,k'_j)$ is allocated to $k_j$, then $k'_j$ violates EQX when compared to $k_j$ as 
\[
{v_{k_j}}(\pi_{k_j} \setminus \{(p,k_j)\}) = \frac{1}{2}+\epsilon > {v_{k_j'}}(\pi_{k_j'}) = {v_{k_j'}}((j,k_j')) = \frac{1}{2} - \epsilon.
\]
If $j$ receives both $(j,k_j)$ and $(j,k_j')$, then one of $k_j$ and $k'_j$ receives no edges and violates EQX when compared to $j$.

Given the above claim, we further show that in an EQX orientation $\pi$, each agent $i\in [n]$ must receive exactly one of the edges $(p,i)$ and $(q,i)$. Suppose not.
If $i$ receives both $(p,i)$ and $(q,i)$, then $k_i$ and $k'_i$ violate EQX when compared to $i$, as after removing the edge with value $\epsilon$ for $i$ (such an edge exists due to the above claim), the value of $i$ is still $1-2\epsilon > \frac{1}{2}+\epsilon$.
If $i$ receives neither $(p,i)$ nor $(q,i)$, then $i$ violates EQX when compared to $p$, as after removing the edge with value $\epsilon$ for $p$, the value of $p$ is greater than $2\epsilon$.

One can verify that an EQX orientation exists if and only if the \ssc{Partition} instance is a yes-instance. 
Suppose that the \ssc{partition} instance is a yes-instance, and the solution consists of $I_1$ and $I_2$. For each $i\in[n]$, if $i\in I_1$, allocate $(i,p)$ to $p$ and if $i\in I_2$, allocate $(i,q)$ to $q$. Next, for each $i\in [n]$, allocate $i$ the other incident edge with a value of $\frac{1}{2}-\epsilon$ for her.
For each $j\in [n]\cup\{p,q\}$, allocate $(k_j,j)$ to $j$, $(k'_j,j)$ to $k'_j$, and $(k'_j,k_j)$ to $k_j$.
At this point, all edges are allocated, and the orientation is EQX, as $I_1$ and $I_2$ are a solution to the \ssc{partition} instance.

For the reverse direction, suppose that there exists an EQX orientation $\pi$. Let $S_p:=\{i\in [n]\mid (i,p)\in \pi_p \}$ and $S_q := \{i\in [n] \mid (i,q)\in  \pi_q\}$. We prove that $S_p$ and $S_q$ must form a solution to the \ssc{partition} instance. 
First, as each $i$ receives exactly one from edges $(i,p)$ and $(i,q)$, it holds that $S_p\cup S_q = [n]$.
Assume for the contradiction that $S_p$ and $S_q$ do not form a solution, and then, assume $\sum_{i\in S_p} x_i < \sum_{i \in S_q} x_i$. Moreover, as $x_i$'s are integers, we have $\sum_{i\in S_p} x_i \leq T-1$, and thus, 
\[
v_p(\pi_p) \leq \frac{T-1}{2T}(1-2\epsilon) + \epsilon = \frac{1}{2}+\frac{\epsilon}{T} - \frac{1}{2T} < \frac{1}{2}-\epsilon,
\]
where the last inequality transition is derived from $\epsilon\ll \frac{1}{T}$.
For any $i\in [n]$, as $\pi_i$ contains an edge with value $\epsilon$ and an edge with value $\frac{1}{2}-\epsilon$, agent $p$ violates EQX when compared to $i$, contradicting that $\pi$ is an EQX orientation.

As for the chores-instance, we consider the same graph creation but negate agents' valuations.
We claim that in an EQX orientation $\pi$, each $j\in[n]\cup \{p,q\}$ must receive exactly one of the edges  $(j,k_j)$ and $(j,k'_j)$. If $j$ receives both $(j,k_j)$ and $(j,k'_j)$, then one of $k_j$ and $k'_j$ receives no edges in $\pi$ (let $k_j$ be such a vertex), and $j$ violates EQX when compared to $k_j$.
If $j$ receives neither $(j,k_j)$ nor $(j,k'_j)$, then EQX is violated when comparing $k_j$ and $k'_j$.
Based on the claim, we further show that in an EQX orientation $\pi$, each $i\in [n]$ should receive exactly one of edges $(p,i)$ and $(q,i)$; otherwise, $i$ violates EQX when compared to the vertex with the larger value between $k_i$ and $k'_i$.
Then by the argument similar to that of goods, one can verify that there exists an EQX orientation in the created instance if and only if the \ssc{partition} instance is a yes-instance.
\end{proof}

We remark that all hardness results in this section hold for normalized instances, where the total values of items for agents are identical.

\section{EF1 Orientations for Chores}

In this section, we focus on the EF1 orientation for chores that have not been covered in the literature.
For goods, it is known that an EF1 orientation exists in the general orientation model \citep{DBLP:journals/corr/abs-2409-13616}. 
However, for chores, EF1 orientations may not exist, even for simple graphs.

\begin{theorem} \label{thm::EF1}
    For the chores-instance with simple graphs, EF1 orientations do not always exist. 
    For simple graphs, one can compute an EF1 orientation in polynomial time when such an orientation exists.
\end{theorem}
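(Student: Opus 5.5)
The plan is to turn EF1 for chores on a simple graph into a purely combinatorial condition on in-degrees. I will first show that an orientation is EF1 exactly when every agent receives at most one edge that she values strictly negatively. Nonexistence and polynomial-time decidability both follow from that characterization.

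\textbf{Step 1 (characterization).} Irrelevant items are worth $0$, and in a simple graph $|E_i\cap E_j|\le 1$. So for agents $i\ne j$ we have $v_i(\pi_j)=v_i((i,j))\le 0$ if $j$ holds the edge $(i,j)$, and $v_i(\pi_j)=0$ otherwise. We check the three EF1 conditions in turn.
\begin{itemize}
\item Condition (2), removing an item from $\pi_j$, can only lower $v_i(\pi_j)$ to $0$. It is therefore never easier than condition (3), which already allows comparison against a value at most $0$ after removal from $\pi_i$.
\item Suppose $i$ receives two edges $e_1=(i,j)$ and $e_2$ with $v_i(e_1),v_i(e_2)<0$. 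Compare $i$ with $j$. Since $j$ does not hold $e_1$, we get $v_i(\pi_j)=0$. But $v_i(\pi_i)<0$, and removing any single item still leaves a negative value. Hence EF1 fails.
\item Suppose instead $i$ holds at most one negatively valued edge. Then $v_i(\pi_i\setminus\{e\})\ge 0\ge v_i(\pi_j)$ for every $j$, where $e$ is that edge if it exists; otherwise $v_i(\pi_i)=0$ already satisfies condition (1).
\end{itemize}
So EF1 is equivalent to: for every $i$, $|\{e\in\pi_i : v_i(e)<0\}|\le 1$.

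\textbf{Step 2 (reduction to bounded in-degree).} Any edge with $v_i(e)=0$ for some endpoint $i$ can be given to that endpoint without affecting the condition, so we orient these edges first. Let $H$ be the subgraph of remaining edges, each of which is strictly negative for both endpoints. An EF1 orientation exists iff $H$ can be oriented so that every vertex has in-degree at most $1$. This holds iff every connected component of $H$ has no more edges than vertices, i.e.\ is a pseudoforest.
\begin{itemize}
\item Necessity is by counting: each edge is received by exactly one vertex of its component.
\item Sufficiency: in a tree, root it and orient each edge toward the child. In a unicyclic component, orient the cycle cyclically and the hanging trees away from the cycle.
\end{itemize}
This check and the orientation itself take linear time (components via BFS/union-find). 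Alternatively, the general case is a bipartite matching from edges to vertex slots of capacity $1$, which directly yields the orientation or certifies that none exists.

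\textbf{Step 3 (nonexistence).} Take $K_4$ with $v_i(e)=-1$ for every incident edge. It has $6$ edges on $4$ vertices, so by pigeonhole some vertex receives at least $2$ negatively valued chores, and Step 1 shows EF1 fails.

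The main obstacle I expect is Step 1: making sure that every case of the EF1 definition is covered for chores. This includes agents with zero-valued incident edges, and degenerate situations where the agent $j$ that makes $v_i(\pi_j)=0$ must exist. That existence is guaranteed by choosing $j$ as the other endpoint of a negative edge that $i$ holds. Once the characterization is in place, the rest is standard pseudoforest orientation.
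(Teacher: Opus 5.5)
Your proof is correct. Its skeleton matches the paper's:
\begin{enumerate}
\item an agent holding two edges she values negatively violates EF1 against the other endpoint of one of them;
\item every edge with a zero-valued endpoint is sent to that endpoint;
\item the objectively negative edges are oriented by rooting trees and orienting cycles cyclically.
\end{enumerate}

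The substantive difference is the condition you extract. The paper states a \emph{global} criterion: the number of objectively negative edges is at most the number of vertices. You instead require each connected component of the objectively negative subgraph $H$ to have at most as many edges as vertices. Your component-wise condition is the right one; the global count is necessary but not sufficient.

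For instance, take $K_4$ on agents $1,2,3,4$ with every value $-1$, and add agents $5,6$ with edges $(1,5),(1,6)$ valued $0$ by both endpoints. There are $6$ objectively negative edges and $6$ vertices, so the paper's condition holds. Yet some agent of the $K_4$ must hold two edges of value $-1$, so no EF1 orientation exists. Correspondingly, the paper's constructive step (``orient the cycle, and the remaining edges form a tree'') only works for unicyclic components, which is exactly what your pseudoforest test guarantees.

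In Step~1 you also prove the full equivalence: at most one self-negative edge per agent suffices for EF1. The paper uses this only implicitly. Your argument therefore gives a correct polynomial-time decision procedure, whereas the paper's stated characterization misclassifies instances like the one above.

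One wording fix: removing a chore from $\pi_j$ \emph{raises} $v_i(\pi_j)$, so condition (2) implies condition (1) and is never needed. Your conclusion is right, but the sentence says ``lower''.
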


We say an edge is \emph{objectively negative} if it results in negative values for both endpoints.
We present a necessary and sufficient condition for an instance with simple graphs to admit EF1 orientations.

\begin{proposition} \label{prop:pigeonhole}
    In a chores-instance, there is an EF1 orientation on a simple graph if and only if the number of objectively negative edges is not greater than the number of vertices in the graph.
\end{proposition}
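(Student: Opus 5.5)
The plan has two parts. First I will show that in any EF1 orientation every vertex receives at most one edge of negative value to itself. This gives the ``only if'' direction by counting. For the ``if'' direction I will build an orientation in which every vertex holds at most one such edge, and then check that this already implies EF1.

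\emph{Key structural lemma.} Throughout, irrelevant items have value $0$, as assumed for envy-based notions. Since the graph is simple, for agents $i\neq j$ the bundle $\pi_j$ contains at most one item that $i$ values non-zero, namely the edge $(i,j)$ when $j$ holds it. Suppose $i$ holds two edges $e=(i,j)$ and $f$ with $v_i(e),v_i(f)<0$. Then $j$ does not hold $(i,j)$, so $v_i(\pi_j)=0$. Removing any single item from $\pi_j$ leaves $i$'s value of it at $0$. Removing any single item from $\pi_i$ still leaves a negative chore, so $v_i(\pi_i\setminus\{g\})<0$ for every $g$. Hence all three EF1 conditions fail for the pair $(i,j)$. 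So in an EF1 orientation each vertex holds at most one edge that is negative to it.

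\emph{Necessity.} Every objectively negative edge is negative for whichever endpoint receives it. By the lemma, the map sending each objectively negative edge to its receiver is injective, so their number is at most the number of vertices.

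\emph{Sufficiency.} Consider the edges that are not objectively negative. Each such edge has value $0$ for some endpoint, and I give it to that endpoint. I then orient the objectively negative edges so that each vertex receives at most one of them. Given such an orientation, every agent $i$ holds at most one negatively valued item. Removing it leaves $v_i(\pi_i\setminus\{g\})=0\ge v_i(\pi_j)$, because all values are non-positive. So condition (3) of EF1 holds, and if $i$ holds no negative item, condition (1) holds directly. For the orientation step I use Hall's theorem on the bipartite graph between objectively negative edges and their two endpoints; the required orientation exists exactly when every set of objectively negative edges spans at least as many vertices as it has edges. Equivalently, every connected component of the objectively-negative subgraph must be a pseudoforest.

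\emph{Main obstacle.} The hard step is deriving this Hall condition from the hypothesis that the total number of objectively negative edges is at most the number of vertices. I would try to reduce to components: EF1 only compares adjacent agents, so it decomposes over connected components, and I would read the counting hypothesis as applying to each component, and in fact to each connected piece of the negative subgraph. This is exactly where the argument is most delicate. A purely global count does not immediately give Hall's condition for every subset, so the bookkeeping here must be checked carefully against how the statement counts edges and vertices. Once Hall's condition holds, the orientation comes from a bipartite matching, which also gives the polynomial-time algorithm claimed in Theorem~\ref{thm::EF1}.
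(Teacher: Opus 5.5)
Your structural lemma and the necessity direction are correct and match the paper. Your lemma covers any edge negative to $i$, not only objectively negative ones, which is what you need later. The sufficiency check is also right: once every agent holds at most one item she values negatively, condition (3) of EF1 gives $v_i(\pi_i\setminus\{g\})=0\ge v_i(\pi_j)$.

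\textbf{The gap.} The step you flagged is not a bookkeeping issue. The global count does not imply Hall's condition, and in fact the ``if'' direction of the statement is false as written. Take $K_4$ on $\{1,2,3,4\}$ together with two disjoint edges $(5,6)$ and $(7,8)$, every edge valued $-1$ by both endpoints. There are $8$ objectively negative edges and $8$ vertices, whether ``the graph'' means the whole graph or only the subgraph of objectively negative edges. Yet the six edges of the $K_4$ must go to four vertices, so some $i\in\{1,2,3,4\}$ receives two of them, say $(i,j)$ and $(i,k)$. Your own lemma then shows that $i$ violates EF1 against $j$.

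Reading the hypothesis per component of the whole graph does not rescue it either. Take $K_4$ with all edges valued $-1$, plus a pendant path $4$--$5$--$6$ whose edges both endpoints value at $0$. This graph is connected, has $6$ vertices and $6$ objectively negative edges, and still has no EF1 orientation. So no argument can close this step from the stated hypothesis; the hypothesis itself has to change.

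\textbf{What your argument does prove.} It proves the correct characterization: an EF1 orientation exists if and only if every connected component of the subgraph of objectively negative edges has at most as many edges as vertices, i.e.\ is a pseudoforest. Equivalently, every set of $k$ objectively negative edges has at least $k$ endpoints, which is Hall's condition for your bipartite graph. Under that hypothesis your matching argument is complete and also yields the polynomial-time algorithm.

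The paper's proof of sufficiency has the same hole, only hidden. It roots tree components, and for a component ``containing a cycle'' it orients that cycle and asserts the remaining edges form a tree. That presupposes each component of the negative subgraph has at most one cycle, which is exactly the per-component condition the global count fails to guarantee. Under the corrected hypothesis, the paper's procedure is a valid alternative to your Hall step: orient the unique cycle cyclically, root the hanging trees at the cycle vertices, and give each edge to its child endpoint.
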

\begin{proof}
We begin with a claim that if there exists an EF1 orientation $\pi$, then each agent should receive at most one objectively negative edge. Suppose not, and assume that $i$ receives two objectively negative edges $(i,j)$ and $(i,k)$ in $\pi$, 
then $i$ will violate EF1 when compared to $j$, as after removing any edge in $\pi$, the value of $i$ is negative, but $i$'s value for $\pi_j$ is zero (as $(i,j)$ is in $\pi_i$, then $\pi_j\cap E_i=\emptyset$); 
recall that $i$ values $e$ at zero for all $e\notin E_i$.
Thus, based on this claim, if there exists an EF1 orientation, the number of objectively negative edges is at most the number of vertices.

For the ``if'' direction, when the number of objectively negative edges is at most the number of vertices, we can compute an EF1 orientation. First, allocate all edges to an incident agent who has zero values on them. 
Then, if there are edges remaining unallocated, they must be objectively negative; they will form one or more components. If a component formed by the objectively negative edges contains no cycle, i.e., is a tree, we can allocate every edge downwards to the corresponding child vertex by selecting an arbitrary vertex as the root of the tree. 
If a component contains a cycle, we can allocate every edge in the cycle to an endpoint in the same direction, and the remaining edges will form a tree. 
Repeat this process until all edges are allocated. The procedures terminate in at most $m$ rounds, as in each round, at least one edge is allocated.
\end{proof}

The characterization above indicates that if, in a simple graph, the number of objectively negative edges exceeds the number of vertices, then no EF1 orientation will exist. Such an instance clearly exists, demonstrating that EF1 orientations do not always exist for chores.
Moreover, the characterization directly leads to a polynomial-time algorithm for deciding the existence of EF1 orientations for simple graphs and for computing one when it exists.

On the contrary, the decision problem becomes computationally intractable for multigraphs.

\begin{theorem}\label{thm::EF1-multi}
For the chores-instance, determining whether an EF1 orientation exists in a multigraph is NP-complete.
\end{theorem}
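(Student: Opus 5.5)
Membership in NP is immediate, since EF1 of a given orientation can be checked in polynomial time. For hardness I would reduce from a balanced-orientation-type problem, specifically from \ssc{Partition}, to match the style of the EQ reductions. The key structural fact from the simple-graph case (Proposition~\ref{prop:pigeonhole}) carries over partially.

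If agent $i$ receives two chores from the bundle $E_i\cap E_j$ that are both negative for $i$, then $i$ values $\pi_j$ at most at the (nonpositive) value of the remaining $E_i\cap E_j$ chores held by $j$. EF1 then becomes a numerical comparison between $i$'s bundle minus her worst chore and $i$'s valuation of the parallel edges that went to $j$. So multi-edges turn EF1 into a weighted balancing condition between adjacent pairs.

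The construction I would use is as follows. Create two agents $p,q$ joined by $n$ parallel edges $e_1,\ldots,e_n$, with $v_p(e_k)=v_q(e_k)=-x_k$. Add one extra parallel edge $e_0$ of value $-M$ for both, where $M\ge \max_k x_k$, so that $e_0$ is the chore removed in the EF1 comparison. To pin down which of $p,q$ gets $e_0$, I would instead give each of $p$ and $q$ its own private heavy chore. This is realized via a pendant gadget: a neighbor $r_p$ with an edge that $r_p$ values at $0$ and $p$ values at $-M$. I then force that edge to go to $p$ by surrounding $r_p$ with a small gadget (e.g.\ a multi-edge to another dummy) in which $r_p$ would violate EF1 if it received it. Zero-valued edges are harmless for the receiver, so the forcing needs care; an alternative is to make the edge objectively negative and attach $r_p$ to a cycle of objectively negative dummy edges, which already saturates $r_p$ as in the pigeonhole argument.

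With both $p$ and $q$ each holding one heavy chore in addition to their share of $\{e_1,\ldots,e_n\}$, EF1 for $p$ toward $q$ reads $v_p(\pi_p\cap\{e_k\})\ge v_p(\pi_q\cap\{e_k\})$ after removing the heavy chore, i.e.\ $\sum_{k\in S_p}x_k\le\sum_{k\in S_q}x_k$. Symmetrically for $q$, this forces equality, i.e.\ a \ssc{Partition} solution. Conversely, given a partition, I orient accordingly and check that all dummies satisfy EF1.

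The main obstacle is the forcing gadget. Removing one item in EF1 is flexible: $p$ may remove an $x_k$ chore rather than the heavy one. To handle this, the private heavy chore must be valued by $p$ and $q$ so that removal of the heavy item is the best choice, while the comparison still pins down equality. Scaling all $x_k$ to be much larger than differences does not help directly. So I would instead use the chore-removal on the other side: make the heavy chore \emph{lie on the $p$--$q$ multi-edge bundle}, so that $p$'s valuation of $\pi_q$ also includes a heavy chore. Verifying that slack of one item cannot be exploited (possibly by doubling every $x_k$ into two parallel copies) is the step I expect to need the most adjustment.
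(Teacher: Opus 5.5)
\textbf{There is a genuine gap: the forcing gadget is never produced, and each concrete gadget you propose breaks the reduction.}

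Your core mechanism is right, and it is exactly what drives the paper's proof. If each of $p,q$ is forced to hold one heavy chore that the other agent does not see, then EF1 between $p$ and $q$, after discarding that chore, yields $\sum_{k\in S_p}x_k\le\sum_{k\in S_q}x_k$ and the reverse inequality.

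What is missing is a gadget that forces the heavy chore onto $p$ without creating new envy.
\begin{itemize}
\item In both pendant versions, $E_p\cap E_{r_p}=\{(r_p,p)\}$. So whenever $p$ holds this edge, $v_p(\pi_{r_p})=0$, while $v_p(\pi_p\setminus\{e\})\le-\sum_{k\in S_p}x_k<0$ for every $e\in\pi_p$ once $S_p\neq\emptyset$. Thus $p$ violates EF1 toward $r_p$.
\item With the cycle of objectively negative dummy edges, which does force $(r_p,p)$ onto $p$, the constructed instance therefore has no EF1 orientation at all.
\item With the zero-valued version, the edge simply stays with $r_p$ and nothing is forced.
\item Your final pivot fails the other way. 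If the heavy chores sit on the $p$--$q$ bundle, then whichever agent does not hold one (or both agents, if each holds one) compares against a bundle containing a $-M$ chore. So for $M\ge\sum_k x_k$ at least one of your two inequalities becomes slack, equality is never forced, and the $p$--$q$ comparison no longer encodes \textsc{Partition}.
\end{itemize}

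The obstacle you single out is not real. For chores, condition (3) of EF1 holds for some $e\in\pi_i$ if and only if it holds for $i$'s most negative item. Once the heavy chore is more negative than every $x_k$, removing it is automatically optimal, so no doubling of the $x_k$ is needed.

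The missing idea is that the agent who forces the heavy chore onto $p$ must itself hold a chore that $p$ values at least as badly as $p$'s residual bundle. The paper does this with one extra vertex $3$:
\begin{itemize}
\item Vertex $3$ is joined to $1$ by parallel edges $e_{n+1},e_{n+2}$ and to $2$ by parallel edges $e_{n+3},e_{n+4}$, all valued $-2T-1$ by both endpoints.
\item An endpoint taking both edges of such a pair values the other endpoint's bundle at $0$ and violates EF1. So $1$ and $3$ each take one of $e_{n+1},e_{n+2}$, and likewise $2$ and $3$ each take one of $e_{n+3},e_{n+4}$.
\item Agent $1$ then values $\pi_3$ at $-(2T+1)<-2T\le-\sum_{k\in S_1}x_k$, so she does not envy $3$.
\item Agent $3$ satisfies EF1 toward both neighbours after dropping one heavy chore.
\end{itemize}
The only binding constraints are then the two $1$--$2$ inequalities you derived.
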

\begin{proof}
    The decision problem is in NP, as given an orientation, one can decide whether it is EF1 in polynomial time. We  derive below the reduction from the \ssc{partition} problem. Create a chores-instance with 3 vertices and $n+4$ edges as follows:
\begin{itemize}
    \item create vertices $1,2,3$;
    \item create $n$ edges $e_1,\ldots,e_n$ between 1 and 2. Moreover, create two edges $e_{n+1},e_{n+2}$ (resp., $e_{n+3},e_{n+4}$) between 1 and 3 (resp., between 2 and 3);
    \item for each $k\in [4]$, the value of $e_{n+k}$ is $-2T-1$ for both of its endpoints, and for each $i\in [n]$, the value of $e_i$ is $-x_i$ for both of its endpoints.
\end{itemize}

We claim that in an EF1 orientation $\pi$, agent 3 cannot receive both $e_{n+1},e_{n+2}$ or both $e_{n+3},e_{n+4}$. Suppose that 3 receives both $e_{n+1},e_{n+2}$, then agent 3 would value agent 1's bundle at zero and violates EF1. The same reasoning applies to $e_{n+3},e_{n+4}$.
Similarly, agent 1 (resp., agent 2) cannot receive both $e_{n+1},e_{n+2}$ (resp., $e_{n+3},e_{n+4}$).
Therefore, in an EF1 orientation $\pi$, agent 1 (resp., agent 2) must receive exactly one of $e_{n+1},e_{n+2}$ (resp., $e_{n+3},e_{n+4}$). Then at this point, 3 satisfies EF1 no matter how the remaining edges are allocated.

For agents 1 and 2, they do not violate EF1 when compared to agent 3, as the total value of $e_1,\ldots,e_n$ is greater than $-2T-1$. When verifying the EF1 condition between agents 1 and 2, edge $e_{n+k}$ for some $k$ would always be hypothetically removed.
Therefore, there exists an EF1 orientation if and only if the \ssc{partition} instance is a yes-instance.
\end{proof}

\section{Conclusion}

In this paper, we studied proportionality, equitability, and their relaxations in the fair orientation problem and presented a picture of existence and computational results on these fairness notions.
We also addressed the previously unexplored question of EF1 orientations for chores.
We hope these results contribute to a deeper understanding of these widely studied fairness notions in the orientation model.

Looking forward, as the hardness results in Section \ref{sec:EQ} are derived from \ssc{partition}, it remains unknown whether pseudo-polynomial-time algorithms exist.
For proportionality, it is worthwhile to consider PROPm \citep{DBLP:conf/aaai/BaklanovGGS21}, a notion lying between PROP1 and PROPX, and to investigate whether PROPm orientations always exist. The non-existence example for PROPX presented in this paper does not imply the non-existence of PROPm orientations.

From a broader perspective, developing new fairness notions applicable in the general orientation model, with no restrictions on the number of relevant agents per item, is a promising and valuable direction. 
Our work has provided some negative results regarding the existing fairness notions and can serve as a foundation for the development of new fairness concepts.

\section*{Acknowledgments}
This work is supported in part by the Guangdong Basic and Applied Basic Research Foundation (Grant No. 2024A1515011524) and the Hong Kong SAR Research Grants Council (Grant No. PolyU 15224823).

\bibliographystyle{ACM-Reference-Format}
\bibliography{ref}


\newpage
\appendix

\section*{Appendix}


\section{Proof of Theorem \ref{thm:prop1:fPO}}\label{sec::prop:meta}

First, we present a key lemma.

\begin{lemma}\label{lem::fpo-2-ak-last}
	For any orientation $\pi$, one can compute in polynomial time a fPO orientation $\pi^*$ such that: (i) $\pi^*$ either Pareto dominates $\pi$ or gives every agent the same value as $\pi$, and (ii) the undirected consumption graph $\mathcal{CG}_{\pi^*}$ is acyclic.
\end{lemma}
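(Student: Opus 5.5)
The plan is to use linear programming together with the standard cycle-cancelling argument, in the style of \cite{DBLP:journals/ior/SandomirskiyS22}, adapted so that only relevant agent–item pairs are used. The input orientation $\pi$ may be fractional; an integral one is a special case.

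\textbf{Step 1 (fPO Pareto improvement).} Introduce variables $x_{i,e}\ge 0$ for $i\in N$ and $e\in E_i$ only, with the constraint $\sum_{i\in N_e} x_{i,e}=1$ for each $e$. Impose $\sum_{e\in E_i} v_i(e)x_{i,e}\ge v_i(\pi_i)$ for each $i$, and maximize the welfare $\sum_i\sum_{e\in E_i} v_i(e)x_{i,e}$.
\begin{itemize}
\item The LP is feasible, since $\pi$ itself is a solution, and bounded, so an optimum $x^*$ is found in polynomial time.
\item $x^*$ gives every agent at least her value under $\pi$. Hence it either Pareto dominates $\pi$ or gives every agent exactly the same value.
\item $x^*$ is fPO. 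If some fractional orientation $y$ Pareto dominated $x^*$, then $y$ would also satisfy the value constraints and would have strictly larger welfare, a contradiction.
\end{itemize}
Maximizing the sum of values is enough for fPO here because all weights are equal and positive.

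\textbf{Step 2 (acyclic consumption graph).} Take $x^*$ to be a vertex (basic feasible solution) of the LP polytope, which LP algorithms return. Any optimal solution can also be converted to an optimal vertex in polynomial time.

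Suppose the undirected consumption graph of $x^*$ had a cycle $i_1,e_1,i_2,e_2,\ldots,i_k,e_k,i_1$. Every edge on this cycle is a pair with $x^*_{i,e}\in(0,1)$, and every pair is relevant by construction. Perturb along the cycle: increase $x_{i_j,e_j}$ by $\delta_j$ and decrease $x_{i_{j+1},e_j}$ by $\delta_j$. This keeps every item-sum constraint satisfied.

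Agent $i_{j+1}$ receives $\delta_{j+1}$ more of $e_{j+1}$ and $\delta_j$ less of $e_j$. Choose the ratios $\delta_{j+1}v_{i_{j+1}}(e_{j+1})=\delta_j v_{i_{j+1}}(e_j)$ so that each agent's value is unchanged. Under this choice, every constraint whose tightness matters stays satisfied, and the objective is unchanged.

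The resulting family of points is two-sided, giving both $x^*+t d$ and $x^*-t d$ for small $t$. This would contradict $x^*$ being a vertex, unless the ratios around the cycle fail to close consistently. If they fail to close, then $\prod_j v_{i_{j+1}}(e_j)/v_{i_j}(e_j)\neq 1$. In that case a suitable weighted exchange strictly improves one agent while keeping the others fixed, contradicting fPO.

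Zero values must be handled separately. If some $v_{i}(e)=0$ on the cycle, we can instead shift that item fully without changing that agent's value. The clean way to handle this is to let the exchange direction absorb the slack and argue that the support strictly shrinks. Repeating the argument gives a polynomial-time elimination: at most $\sum_i|E_i|$ rounds, each removing a support edge while keeping the same utilities, so fPO is preserved.

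\textbf{Main obstacle.} The delicate part is Step 2: showing that a cycle can always be removed without lowering anyone's utility, including when values are zero or have mixed signs.

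I would first try the fPO characterization via weights. Because $x^*$ is welfare-maximizing with positive weights, complementary slackness provides prices $p_e$ such that each $e$ is only held by agents maximizing $v_i(e)+\lambda_i v_i(e)$, i.e.\ $(1+\lambda_i)v_i(e)-p_e=0$ on the support. This makes the product of ratios around any cycle equal to $1$, so the utility-preserving cycle direction exists.

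We then move along that direction until some $x_{i,e}$ hits $0$. Every point along the way has the same values for all agents and hence remains fPO. The support strictly decreases, which yields polynomially many iterations.
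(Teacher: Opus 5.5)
Your proof is correct, but it takes a different route from the paper's.

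The paper works combinatorially from $\pi$:
\begin{itemize}
\item it first makes the orientation non-malicious;
\item it then repeatedly cancels simple cycles with $\Pi(C)\le 1$ in the weighted directed consumption graph $\overrightarrow{\mathcal{CG}}$, keeping all agents on the cycle but one indifferent;
\item it certifies fPO via its characterization lemma (non-malicious and no cycle with $\Pi(C)<1$);
\item it gets acyclicity because an undirected cycle in $\mathcal{CG}_{\pi^*}$ yields two oppositely directed cycles whose products multiply to $1$.
\end{itemize}
You instead get property (i) and fPO in one step from the LP that maximizes utilitarian welfare subject to $v_i(x_i)\ge v_i(\pi_i)$, so the cycle characterization of fPO is never needed. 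LP duality then gives you the product-one property directly, and $1+\lambda_i$ are exactly the weights of part (iii) of that characterization, which the paper later needs for rounding.

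Your termination argument is also more transparent. You only move mass along cycles inside the current support, so no support edge is ever created and the support strictly shrinks. The paper's cancellation pushes mass along edges $e\to i$ with $\pi_{i,e}=0$, which can create new consumption edges, so its edge-counting bound needs more care. What the paper's route buys is an explicit combinatorial procedure that needs no LP solver.

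The one place to tighten your write-up is the zero-value case, which you handle only informally. From $(1+\lambda_i)v_i(e)=p_e$ on the support (or directly from optimality), two holders of the same item either both value it at $0$ or both value it nonzero with the same sign.
\begin{itemize}
\item If both value it at $0$, moving just that item between them is already a two-sided direction that preserves every constraint. So at an optimal vertex no item is shared by two agents who both value it at $0$.
\item Otherwise the ratio product around the cycle is well defined, and your dichotomy finishes the argument: product $1$ contradicts being a vertex, and product $\ne 1$ contradicts optimality.
\end{itemize}
With this made explicit, the vertex argument closes on its own and no iterative elimination is needed.
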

To prove the lemma, we need extra notation. In the following, we say that an item $e$ is:
\begin{itemize}
    \item a chore if $v_i(e)<0$ for all $i \in N_e$;
    \item neutral if $v_i(e)=0$ for at least one $i\in N_e$ and $v_j(e)\leq 0$ for all $j\in N_e$;
    \item a good if $v_i(e)>0$ for at least one $i\in N_e$;
    \item a pure good if $v_i(e)>0$ for all $i\in N_e$.
\end{itemize}

An orientation $\pi$ is \emph{non-malicious} if each good $e$ is consumed by agents $i\in N_e$ with $v_i(e)>0$ and each neutral item $e$ is consumed by agents $i \in N_e$ with $v_i(e)=0$. 
Every fPO orientation is clearly non-malicious.

We will consider agent-object graphs, which are bipartite graphs where the nodes on one side represent agents and the nodes on the other side represent objects.
In the \emph{(undirected) consumption-graph} $\mathcal{CG}_{\pi}$ of an orientation $\pi$, there is an edge between agent $i\in [n]$ and an item $e\in E$ if and only if $\pi_{i,e}>0$; note that since $\pi$ is an orientation, if there is an edge between $i$ and $e$, then $i\in N_e$ and $e\in E_i$.

The \emph{weighted directed consumption-graph} $\overrightarrow{\mathcal{CG}}_{\pi}$ of an orientation $\pi$ is constructed as follows. There is an edge $(i\rightarrow e)$ with weight $w_{i\rightarrow e} =|v_i(e)|$ if one of the two conditions holds:
\begin{itemize}
    \item $\pi_{i,e}>0, v_i(e) \geq 0$ and $e\in E_i$;
    \item $\pi_{i,e}<1, v_i(e)<0$ and $e\in E_i$.
\end{itemize}
The opposite edge $(e\rightarrow i)$ with weight $w_{e\rightarrow i}=\frac{1}{|v_i(e)|}$ is included in $\overrightarrow{\mathcal{CG}}_{\pi}$ in one of the two cases:
\begin{itemize}
    \item $\pi_{i,e}>0, v_i(e)<0$ and $e\in E_i$;
    \item $\pi_{i,e}<1, v_i(e)>0$ and $e\in E_i$.
\end{itemize}

The \emph{product} of a directed path $P$ in $\overrightarrow{\mathcal{CG}}_{\pi}$, denoted $\Pi(P)$, is the product of weights of edges in $P$. In particular, the product of a cycle $C=(i_1\rightarrow e_1 \rightarrow \cdots \rightarrow e_L \rightarrow i_{L+1}=i_1)$ is
$$
\Pi(C) = \prod_{k=1}^L(w_{i_k\rightarrow e_k}\cdot w_{e_k\rightarrow i_{k+1}}).
$$

Closely following the steps in \cite{DBLP:journals/ior/SandomirskiyS22}, we present the characterization of fPO orientation. 
\begin{lemma}\label{lem::fpo-characterisation}
    Given an orientation $\pi$, the following three properties are equivalent 
     \begin{enumerate}[label=(\roman*)]
        \item $\pi$ is fPO;
        \item $\pi$ is non-malicious and its directed consumption graph $\overrightarrow{\mathcal{CG}}_{\pi}$ has no cycle $C$ with $\Pi(C)<1$.
        \item there is a vector of weights $\lambda=(\lambda_i)_{i\in [n]}$ with $\lambda_i>0, i\in [n]$, such that $\pi_{i,e}>0$ implies $\lambda_i v_i(e) \geq \lambda _j v_j(e)$ for all $i,j\in [n]$ and $e\in E$.
     \end{enumerate}
\end{lemma}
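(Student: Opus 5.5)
We prove the cycle of implications (iii)$\Rightarrow$(i)$\Rightarrow$(ii)$\Rightarrow$(iii), following \cite{DBLP:journals/ior/SandomirskiyS22}. Orientations are fractional vectors supported on the relevance sets, so every comparison of the form $\lambda_i v_i(e)\ge \lambda_j v_j(e)$ is read for $i,j\in N_e$. This is harmless, because no orientation can put mass on a pair $(j,e)$ with $j\notin N_e$.

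\textbf{(iii)$\Rightarrow$(i).} This is the weighted-welfare argument. Suppose $\lambda$ exists. For every item $e$, the orientation $\pi$ places its whole unit of $e$ on agents that maximize $\lambda_j v_j(e)$ over $N_e$. Hence $\pi$ maximizes $\sum_i \lambda_i v_i(\pi_i)$ over all fractional orientations. A fractional orientation that Pareto dominates $\pi$ would have strictly larger weighted welfare, because every $\lambda_i>0$. That is a contradiction.

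\textbf{(i)$\Rightarrow$(ii).} Non-maliciousness is immediate. If a good $e$ sits partly on an agent $i$ with $v_i(e)\le 0$, move that portion to an agent $j\in N_e$ with $v_j(e)>0$. If a neutral item sits partly on an agent with negative value, move it to an agent with zero value. Either move weakly helps everyone and strictly helps someone.

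For the cycle condition, take a cycle $C=(i_1\to e_1\to i_2\to\cdots\to i_1)$ with $\Pi(C)<1$ and perform a small exchange along it. For each $k$, agent $i_k$ gives up a small amount $\delta_k$ of $e_k$ to agent $i_{k+1}$. Edge existence guarantees this move is feasible and has the right sign. An edge $i\to e$ means $i$ can shed a good it holds, or absorb more of a chore. An edge $e\to i$ means $i$ can absorb more of a good, or shed a chore it holds.

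Choose the amounts inductively so that agent $i_{k+1}$ exactly breaks even on the transfer of $e_k$ against its own transfer of $e_{k+1}$. That is, set
\[
\delta_{k+1}|v_{i_{k+1}}(e_{k+1})| = \delta_k |v_{i_{k+1}}(e_k)|,
\]
with signs arranged so the gain and loss cancel. Going around the cycle, agent $i_1$ is left with a net surplus proportional to $1-\Pi(C)>0$. Scaling all $\delta_k$ down keeps every move feasible, so we obtain a Pareto improvement.

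The main obstacle is the sign bookkeeping across the four edge types, in particular cases where an item is a good for one agent and a chore for another. I will handle this uniformly: an edge $i\to e$ with weight $|v_i(e)|$ is always the value $i$ gives up per unit moved, and an edge $e\to i$ with weight $1/|v_i(e)|$ is always the units received per unit of value gained. The break-even computation is then identical in every case.

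\textbf{(ii)$\Rightarrow$(iii).} Define $\lambda_i=\exp(-d_i)$, where $d_i$ is the shortest-path distance from an auxiliary source to $i$ under edge lengths $\log w$. The source connects to every agent with length $0$. The absence of cycles with product below $1$ means there are no negative cycles, so these distances are well defined.

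Then check the required inequality. Suppose $\pi_{i,e}>0$ and $j\in N_e$.

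Case $v_i(e)>0$. The edge $i\to e$ exists. If $v_j(e)>0$ and $\pi_{j,e}<1$, the edge $e\to j$ exists, and the triangle inequality $d_j\le d_i+\log v_i(e)-\log v_j(e)$ gives exactly $\lambda_i v_i(e)\ge\lambda_j v_j(e)$. If $\pi_{j,e}=1$, then $j=i$, since $\pi_{i,e}>0$ and the portions sum to one. If $v_j(e)\le 0$, the inequality is trivial.

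Zero-value and chore items are handled analogously, using non-maliciousness to exclude the degenerate cases. For a chore $e$ with $\pi_{i,e}>0$, the edges $e\to i$ and $j\to e$ give $d_i\le d_j+\log|v_j(e)|-\log|v_i(e)|$, which is equivalent to $\lambda_i|v_i(e)|\le\lambda_j|v_j(e)|$, the desired inequality for negative values.
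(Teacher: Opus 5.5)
Your architecture matches the paper's. You prove (iii)$\Rightarrow$(i) by weighted welfare, (i)$\Rightarrow$(ii) by an exchange along a cheap cycle, and (ii)$\Rightarrow$(iii) by shortest paths. Two of your choices differ but are fine. Choosing the $\delta_k$ so that all agents but $i_1$ break even is a valid alternative to the paper's choice via $R=\Pi(C)^{1/L}$. Your zero-length super-source is cleaner than the paper's large-weight agent-to-agent edges, because it creates no new cycles and needs no bound.

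The problem is the step you yourself flag, and it is a genuine gap. With $\overrightarrow{\mathcal{CG}}_{\pi}$ as defined, the edge $i_k\to e_k$ can come from the clause $\pi_{i_k,e_k}<1$, $v_{i_k}(e_k)<0$, while $e_k\to i_{k+1}$ comes from $\pi_{i_{k+1},e_k}<1$, $v_{i_{k+1}}(e_k)>0$. Non-maliciousness does not exclude this when $e_k$ is a good partly held by some third agent. In your dictionary, both $i_k$ and $i_{k+1}$ then \emph{absorb} more of $e_k$. That is not a transfer between them: the extra mass must come from the current holder, who loses value, so the break-even computation certifies nothing.

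In fact (i)$\Rightarrow$(ii) is false for this graph. Take agents $1,2,3$ with $N_e=\{1,2,3\}$ and $N_f=\{1,2\}$. Set $v_1(e)=-1$, $v_2(e)=v_3(e)=1$, $v_1(f)=2$, $v_2(f)=1$, and let $\pi$ give $e$ to $3$ and $f$ to $2$. This $\pi$ is non-malicious. It is also fPO: agent $3$ must keep all of $e$, and then agent $2$ must keep all of $f$. Yet $1\to e\to 2\to f\to 1$ is a cycle with product $1\cdot 1\cdot 1\cdot\frac12<1$.

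The paper's proof has the same hole: it simply asserts that consecutive agents on the cycle value $e_k$ with the same sign. The repair is to admit the edge $i\to e$ via $\pi_{i,e}<1$, $v_i(e)<0$ only when $e$ is a chore for every agent in $N_e$. Then, for non-malicious $\pi$, each step $i_k\to e_k\to i_{k+1}$ is one of two genuine bilateral transfers. Either a good passes from a positive-valued holder $i_k$ to a positive-valued $i_{k+1}$, or a chore passes from its holder $i_{k+1}$ to $i_k$. Neutral items are dead ends, and your exchange argument goes through. Your (ii)$\Rightarrow$(iii) uses only such same-sign edges, so it survives unchanged. For goods-only or chores-only instances the issue never arises.

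Two smaller slips. First, your triangle inequality $d_j\le d_i+\log v_i(e)-\log v_j(e)$, and its chore analogue, give the required inequalities for $\lambda_i=\exp(d_i)$, not for $\exp(-d_i)$ as you defined it. Second, the weight-zero edges $i\to e$ with $\pi_{i,e}>0$ and $v_i(e)=0$ have length $\log 0=-\infty$. You should note, as the paper does, that under non-maliciousness they end at items with no out-edges and can be deleted before computing distances.
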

\begin{proof}

(i) $\implies$ (ii). 
If $\pi$ is fPO but malicious, then reallocating items in a non-malicious way strictly improves the value of some agents without harming the others.
Thus, an fPO orientation $\pi$ must be non-malicious.

We now show that there are no directed cycles $C=(i_1\rightarrow e_1 \rightarrow i_2 \rightarrow e_2 \rightarrow \ldots \rightarrow i_L \rightarrow e_L \rightarrow i_{L+1}=i_1)$ in $\overrightarrow{\mathcal{CG}}_{\pi}$ with $\Pi(C)<1$. 
For the sake of contradiction, assume that $C$ is such a cycle. We show how to construct an exchange of items among the agents in $C$ such that their value strictly increases without affecting the other agents. This will contradict the Pareto-optimality of $\pi$.

Define $R:=\Pi(C)^{\frac{1}{L}}$. As $\Pi(C)<1$, then $R<1$ holds. For each $k\in [L]$, there are edges $i_k\rightarrow e_k$ and $e_k\rightarrow i_{k+1}$. Then, by the definition of $\overrightarrow{\mathcal{CG}}_{\pi}$,
\begin{itemize}
    \item either $i_k$ receives a positive amount of $e_k$ and both $i_k$ and $i_{k+1}$ have positive values for $e_k$,
    \item or $i_{k+1}$ has a positive amount of $e_{k+1}$ and both $i_k$ and $i_{k+1}$ have negative values for $e_k$.
\end{itemize}
Now we describe the reallocation. For each $k\in [L]$, agent $i_k$ gives a small positive amount $\epsilon_k$ of $e_k$ to $i_{k+1}$ in the case of goods or $i_{k+1}$ gives $\epsilon_k$ fraction of $e_k$ to $i_k$ in the case of chores where $\epsilon_k \in (0,h_k]$ and $h_k=\pi_{i_k,e_k}$ for a good and $h_k=\pi_{i_{k+1},e_k}$ for a chore.
Then, each $i_k$ loses a value of $\epsilon_{k}\cdot |v_{i_k}(e_k)|$ and gains a value of $\epsilon_{k-1}\cdot |v_{i_k}(e_{k-1})|$, resulting in a net change of $\epsilon_{k-1}\cdot |v_{i_k}(e_{k-1})| - \epsilon_{k}\cdot |v_{i_k}(e_k)|$ in the value of agent $i_k$.
In order to guarantee that every agent in $C$ strictly gains from the reallocation, it is sufficient to choose $\epsilon_1,\ldots,\epsilon_k$ such that the following inequalities hold for all $k\in [L]$:
$$
\epsilon_{k-1}\cdot |v_{i_k}(e_{k-1})| - \epsilon_{k}\cdot |v_{i_k}(e_k)| > 0 \Longleftrightarrow \frac{\epsilon_{k-1}}{\epsilon_k} > \frac{|v_{i_k}(e_k)|}{|v_{i_k}(e_{k-1})|}.
$$

For any $\epsilon_1 >0$, define $\epsilon_k=\epsilon_{k-1} \cdot R\cdot \frac{|v_{i_k}(e_{k-1})|}{|v_{i_k}(e_k)|}$ for every $k\in \{2,\ldots,L\}$. Since $R<1$, the above inequality is satisfied for each $k\in \{2,\ldots,L\}$. It remains to show that it is also satisfied for $k=1$.
Note that
\begin{align*}
    \epsilon_L &=\epsilon_1 R^{L-1} \prod\limits_{k=2}^L\frac{|v_{i_k}(e_{k-1})|}{|v_{i_k}(e_k)|} \\& = \epsilon_1 R^{L-1}   \frac{|v_{i_1}(e_1)|}{|v_{i_1}(e_L)|} \prod\limits_{k=1}^L\frac{|v_{i_k}(e_{k-1})|}{|v_{i_k}(e_k)|}\\
    & =\epsilon_1 \frac{R^{L-1}}{\Pi(C)}\frac{|v_{i_1}(e_1)|}{|v_{i_1}(e_L)|} = \epsilon_1R^{-1}\frac{|v_{i_1}(e_1)|}{|v_{i_1}(e_L)|}, 
\end{align*}
which implies
$$
\frac{\epsilon_1}{\epsilon_L} = R\cdot \frac{|v_{i_1}(e_L)|}{|v_{i_1}(e_1)|} < \frac{|v_{i_1}(e_L)|}{|v_{i_1}(e_1)|},
$$
as required. Therefore, we can choose $\epsilon_1$ sufficiently small so that $\epsilon_k \leq h_k$ for all $k\in [L]$ and such a reallocation is feasible.

(ii) $\implies$ (iii).

We assume that $\overrightarrow{\mathcal{CG}}_{\pi}$ contains no directed cycles $C$ with $\Pi(C)<1$ and $\pi$ is non-malicious. We prove below the existence of weights $\lambda_i$'s satisfying (iii).

Starting from $\overrightarrow{\mathcal{CG}}_{\pi}$, for each pair of distinct agents $i,j\in [n]$, add directed edges $i\rightarrow j$ with weight

{\begin{align*}
w_{i\rightarrow j} = \left( 
    \begin{multlined}[t]
        \max_{e\in E, k\in N_e, v_k(e) \neq 0}\left\{ 1, |v_k(e)|, \frac{1}{|v_k(e)|}  \right\}
    \end{multlined} 
\right) ^{2(n-1)}
\end{align*}}

Note that each new edge has the same weight. Let the resulting new graph be $\overrightarrow{G}$, and we claim that $\overrightarrow{G}$ has no cycle $C$ with weight $\pi(C)<1$.
For the sake of contradiction, assume that $C$ is such a cycle in $\overrightarrow{G}$ with $\pi(C)<1$. Then $C$ must contain at least one new edge.
Each agent vertex appears once in $C$, the cycle $C$ contains at most $2n-2$ old edges. If none of the old edges in $C$ has weight zero, then by the definition of the weight of the new edge, $\pi(C)\geq 1$ holds, a contradiction.
Old edges with weight zero are only possible from an agent to an item ( suppose from some agent $i$ to some item $e$), and moreover, $\pi_{i,e}>0$ and $v_i(e)=0$.
However, as $\pi$ is non-malicious, such $e$ has no outgoing edges, and thus, edge $i\rightarrow e$ cannot be a part of any cycle.

Fix an arbitrary agent (suppose agent 1). For every other agent $j\in [n]$, let $P_{1,j}$ be a directed path from 1 to $j$ in $\overrightarrow{G}$, for which the product $\Pi(P_{1,j})$ is minimal.
Note that the minimum is well-defined and is attained on an acyclic path, as the construction ensures that there is no cycle with a product smaller than 1.

Define $\lambda_j := \Pi(P_{1,j})$ for all $j\neq 1$ and $\lambda_1:=1$. We now prove that these weights satisfy (iii), i.e., $\pi_{i,e}>0$ implies $\lambda_i v_i(e) \geq \lambda_j v_j(e)$ for all $j\in N_e$.
Fix $i,e$ with $\pi_{i,e}>0$ and some $j$ with $j\in N_e$. As $\pi$ is non-malicious, we can, without loss of generality, assume that $i$ and $j$ have an agreement on whether $e$ is a good or a chore, i.e., $v_i(e) \cdot v_j(e) > 0$; if $i$ and $j$ disagree, then by the non-maliciousness, $\lambda_i v_i(e) \geq \lambda_j v_j(e)$ holds for any $\lambda_i,\lambda_j > 0$.

If $e$ is a good (i.e., $v_i(e)>0$ and $v_j(e)>0$), then there is an edge $ i\rightarrow e$ (as $\pi_{i,e}>0$, $v_i(e) > 0 $, and $i\in N_e$) and an edge $e\rightarrow j$ (as $\pi_{j,e}<1$, $v_j(e)>0$, and $j\in N_e$).
Consider the optimal path $P_{1,i}$ and the concatenated path $Q_{1,j}=P_{1,i}\rightarrow e\rightarrow j$. As the path $P_{1,j}$ has the minimal product among all paths from 1 to $j$, we have
\begin{align*}
    \Pi(Q_{1,j}) \geq  \Pi(P_{1,j}) & \Longleftrightarrow \Pi(P_{1,i})\cdot \frac{v_i(e)}{v_j(e)} \geq \Pi(P_{1,j}) \\& \Longleftrightarrow \lambda_i v_i(e) \geq \lambda_j v_j(e).
\end{align*}
If $e$ is a chore (i.e., $v_i(e)<0$ and $v_j(e)<0$), then there is an edge $j\rightarrow e$ (as $v_j(e)<0$, $\pi_{j,e}<1$, and $j\in N_e$) and an edge $e\rightarrow i$ (as $\pi_{i,e}>0$, $v_i(e)<0$, and $i \in N_e$).
Define $Q_{1,i}$ as $P_{1,j} \rightarrow e\rightarrow i$. We have
\begin{align*}
\Pi(Q_{1,i}) \geq \Pi(P_{1,i}) & \Longleftrightarrow  \Pi(P_{1,j}) \cdot \frac{|v_j(e)|}{|v_i(e)|} \geq \Pi(P_{1,i}) \\ &\Longleftrightarrow  \lambda_j|v_j(e)|  \geq \lambda_i |v_i(e)| \\& \Longleftrightarrow  \lambda_i v_i(e) \geq \lambda_j v_j(e).
\end{align*}

(iii) $\implies$ (i).

As in $\pi$, each item $e$ is allocated to the agent $i$ with the highest $\lambda_iv_i(e)$ among all agents in $N_e$, then $\pi$ maximizes $\sum_{e\in E}\sum_{i\in N_e} \lambda_i v_i(e) $ over all (fractional) orientations. Since $\lambda_i$'s are positive, $\pi$ is fPO because any Pareto-improvement must increase the weighted sum.
\end{proof}

We are now ready to prove Lemma \ref{lem::fpo-2-ak-last}.

\begin{proof}[Proof of Lemma \ref{lem::fpo-2-ak-last}]
    If $\pi$ is malicious, implement the following reallocation:
    \begin{itemize}
        \item for each $e\in E$ with $\max_{i\in N_e}v_i(e)>0$, reallocate the share of agents with $v_j(e)\leq 0$ to an agent $i\in N_e$ with $v_i(e)>0$;
        \item for each $e\in E$ with $\max_{i\in N_e}v_i(e)=0$, reallocate the share of agents with $v_j(e)< 0$ to an agent $i$ with $v_i(e)=0$.
    \end{itemize}
    Let the resulting non-malicious orientation be $\pi'$.

    We now describe reallocation of items that eliminates the cycle (if any). 
    Let us call a cycle $C=(i_1\rightarrow e_1 \rightarrow i_2 \rightarrow e_2 \rightarrow \ldots \rightarrow i_L \rightarrow e_L \rightarrow i_{L+1}=i_1)$ in the directed graph $\overrightarrow{\mathcal{CG}}_{\pi'}$ a \emph{simple} graph if each node is visited at most once and for any $i\in [n]$ and $e\in E_i$, only one of edges $i\rightarrow e$ or $e\rightarrow i$ is contained in the cycle.

    We claim that if there is a simple cycle $C$ in $\overrightarrow{\mathcal{CG}}_{\pi'}$ with $\Pi(C)\leq 1$, then $C$ can be eliminated by reallocation of items. The idea of eliminating the cycle is similar to the reallocation in the proof of Lemma \ref{lem::fpo-characterisation}.
    As edges $i_k\rightarrow e_k$ and $e_k\rightarrow i_{k+1}$ exist in $\overrightarrow{\mathcal{CG}}_{\pi'}$, the values $v_{i_k}(e_k)$ and $v_{i_{k+1}}(e_k)$ are both non-zero and have the same sign due to the construction of $\overrightarrow{\mathcal{CG}}_{\pi'}$.
    We implement the following reallocation of items:
    \begin{itemize}
        \item if $v_{i_k}(e_k)>0$ and $v_{i_{k+1}}(e_k)>0$, then take $\epsilon_k$ amount of $e_k$ from $i_k$ and give it to $i_{k+1}$, where $0< \epsilon \leq h_k$ with $h_k=\pi'_{i_k,e_k}$;
        \item if $v_{i_k}(e_k)<0$ and $v_{i_{k+1}}(e_k)<0$, then transfer $\epsilon_k$ of $e_k$ from $i_{k+1}$ to $i_k$ where $0< \epsilon \leq h_k$ with $h_k=\pi'_{i_{k+1},e_k}$.
    \end{itemize}
    Then amounts $\epsilon_k$'s are selected in a way such that $\epsilon_k|v_{i_k}(e_k)| = \epsilon_{k+1}|v_{i_k}(e_{k+1})|$ for every $k\in [L-1]$. Thus for each $k=2,\ldots,L$, the value of agent $i_k$ remains indifferent after the reallocation, while agent $i_1$ is weakly better off because the condition of $\Pi(C) \leq 1$.
    Note that after reallocating items, the resulting allocation is still an orientation. It is not hard to verify that we can select $\epsilon_k$'s as large as possible so that one of edges $i_k\rightarrow e_k$ in $\overrightarrow{\mathcal{CG}}_{\pi'}$ can be removed.

    Repeat this reallocation until there are no simple cycles with $\Pi(C) \leq 1$. As at least one edge is removed each time in the undirected graph of the underlying orientation and there are at most $mn$ edges, we need up to $(n-1)m$ repetitions for achieving the orientation where no simple cycle has product at most 1.
    Let the resulting orientation be $\pi^*$.

    By the construction, $\pi^*$ weakly improves the value of each agent in $\pi$; is non-malicious; has no cycle $C$ in $\overrightarrow{\mathcal{CG}}_{\pi^*}$ with $\Pi(C)<1$. By Lemma \ref{lem::fpo-characterisation}, $\pi^*$ is fPO.

    We now prove that $\mathcal{CG}_{\pi^*}$ is acyclic. For a contradiction, assume that there is a cycle $C$ in $\mathcal{CG}_{\pi^*}$.
    Then in the directed graph $\overrightarrow{\mathcal{CG}}_{\pi^*}$, there are two cycles: $C$ passed in one direction and in the opposite direction. Let them be $\overrightarrow{C}$ and $\overleftarrow{C}$. As $\Pi(\overrightarrow{C})\cdot \Pi(\overleftarrow{C}) = 1$, one of them has the product of at most 1, and indeed, by fPO it must hold that $\Pi(\overrightarrow{C})=\Pi(\overleftarrow{C}) = 1$.
    However, all such cycles were eliminated in the previous stages.

    For the running time of the algorithm, constructing the non-malicious orientation, finding the cycles with $\Pi(C)\leq 1$, and resolving/eliminating the cycle takes time polynomial in $m$ and $n$. As cycle-elimination repeats at most $(n-1)m$ times, the total running time is polynomial in $n$ and $m$.
\end{proof}

We are now ready to prove Theorem \ref{thm:prop1:fPO}.

\noindent
\begin{proof}[Proof of Theorem \ref{thm:prop1:fPO}]
	Let $\pi$ be the orientation where $\pi_{i,e}=\frac{1}{n_e}$ for all $i\in [n]$ and all $e\in E_i$. One can verify that in $\pi$, each agent $i$ receives exactly her proportional share. 
	Based on Lemma \ref{lem::fpo-2-ak-last}, we find another orientation $\pi^*$ where every agent is weakly better off, and hence, each agent $i$ receives at least her proportional share.
	Moreover, $\pi^*$ is fPO and the undirected consumption graph $\mathcal{CG}_{\pi^*}$ is acyclic.
	
We round $\pi^*$ to an integral orientation that is fPO and PROP1. Consider an \emph{undirected shared graph} $\mathcal{SG}_{\pi^*}$ defined as follows: create a vertex for each agent, and there is an edge between vertices $i$ and $j$ if and only if there exists $e$ such that $ \pi^*_{i,e}, \pi^*_{j,e}>0$.
	In other words, the existence of an edge between $i$ and $j$ in $\mathcal{SG}_{\pi^*}$ means agents $i$ and $j$ share an item in $\pi^*$.
	Moreover, as $\pi^*$ is fPO, each shared item must result in the value with the same sign (positive, negative, or zero) for both its endpoints; otherwise, we can increase the value of an agent without harming others, contradicting fPO.
	
	As $\mathcal{CG}_{\pi^*}$ is acyclic, it is not hard to verify that $\mathcal{SG}_{\pi^*}$ is also acyclic, and hence, $\mathcal{SG}_{\pi^*}$ is a forest. 
	For each component of $\mathcal{SG}_{\pi^*}$, we form it into a rooted tree. Then starting from the leaves to the root, we visit every vertex. For each vertex $i$, we let agent $i$ receive fully the item with non-negative value shared with the agent having a higher depth than that of agent $i$, and let agent $i$ receive fully the item with negative value shared with the agent having lower depth than that of agent $i$.
	
	Let $\pi'$ be the resulting integral orientation. We claim that $\pi'$ is PROP1, as compared to $\pi^*$, the value of each agent $i$ for non-negative item weakly increases and there exists at most one item with negative value that is partially allocated to $i$ in $\pi^*$ but is fully allocated to $i$ in $\pi'$.
	After removing such a negative-valued item,
    agent $i$ has value at least her proportional share, and therefore, $\pi'$ is PROP1.
	
	Finally, we prove that $\pi'$ is fPO. Varian \cite{VARIAN1976249} proved that an allocation $\pi$ is fPO if and only if there exist strictly positive weights $\{\lambda_i\}_{i\in [n]}$ of agents such that $\pi$ maxmizes weighted welfare \\ $\sum_{i\in [n]} \lambda_iv_i(\pi_i)$. We show that this characterization holds for the general orientation model and the mixed-instance in the appendix. 
	Since orientation $\pi^*$ is fPO, $\pi^*$ maxmizes weighted welfare $\sum_{i\in [n]} \lambda^*_iv_i(\pi^*_i)$ for some strictly positive weight $\{\lambda^*_i\}_{i\in [n]}$ of agents.
	Hence in $\pi^*$, each item is always allocated to the agent with the highest weighted value with respect to $\{\lambda^*_i\}_{i\in [n]}$.
	After rounding $\pi^*$ to $\pi'$, it is still the case that every item is allocated to the agent with the highest weighted value, and thus, $\pi'$ maximizes the weighted welfare for the weights $\{\lambda^*_i\}_{i\in [n]}$ and is fPO. \end{proof}

\end{document}